\def\showuniversal{\yes}
\setlist{nolistsep}
\newtheorem{theorem}{Theorem}
\newtheorem{lemma}[theorem]{Lemma}
\newtheorem{corollary}[theorem]{Corollary}
\newtheorem{claim}[theorem]{Claim}
\newcommand{\email}[1]{#1}
\newcommand{\mb}{\mathbb}
\newcommand{\mc}{\mathscr}
\newcommand{\ns}{\scriptsize}
\newcommand{\minksum}{+}
\newcommand{\R}{\mathbb{R}}
\newcommand{\s}{\mathbb{S}}
\newcommand{\N}{\mathbb{N}}
\newcommand{\SC}{\Delta}
\newcommand{\im}{\textrm{im}\,}
\newcommand{\eps}{\varepsilon}
\newcommand{\hpaths}{\xi}
\newcommand{\tpaths}{\gamma}
\newcommand{\Back}{\textsc{Back}}
\newcommand{\Front}{\textsc{Front}}
\newcommand{\KK}{\Gamma}
\newcommand{\F}{\mc{F}}
\newcommand{\uz}{u_0}
\newcommand{\ut}{u_1}
\newcommand{\half}{\nicefrac{1}{2}}
\renewcommand{\vec}[1]{#1}
\DeclareMathOperator{\conv}{conv}
\DeclareMathOperator{\rank}{rank}
\DeclareMathOperator{\nerv}{\mathcal{N}}
\DeclareMathOperator{\chain}{\mathcal{C}}
\DeclareMathOperator{\nullity}{nullity}
\let\leq\leqslant
\let\geq\geqslant
\let\le\leqslant
\def\section{\@startsection {section}{1}{\z@}%
  {-3.5ex plus -1ex
    minus -.2ex}{2.3ex plus .2ex}{\large\bf}}
\def\subsection{\@startsection{subsection}{2}%
  {\z@}{-3.25ex plus
    -1ex minus -.2ex}{1.5ex plus .2ex}{\normalsize\bf}}
\def\@fnsymbol#1{\ensuremath{\ifcase#1\or \or 1\or 2\or 3\or 4\or
    5\or 6\or 7 \or 8\ or 9 \or 10\or 11 \else\@ctrerr\fi}}
\newcommand{\basiscoords}{
\path 
(-160:2 and 1) coordinate (e1)
(-70:2 and 1) coordinate (e2)
(90:1.73) coordinate (e3)
;
}
\newenvironment{yxzcoords}{
\basiscoords
\begin{scope}[x={(e2)}, y={(e1)}, z={(e3)}]
}{
\end{scope}
}
\newenvironment{yzxcoords}{
\basiscoords
\begin{scope}[x={(e2)}, y={(e3)}, z={(e1)}]
}{
\end{scope}
}
\newcommand{\polytopeB}[1]{
\draw[#1]
(0,0,2) coordinate (a1) -- (2,0,2) coordinate (a2) -- (0,-2,2) coordinate (a3) -- cycle
(0,0,0) coordinate (b1) -- (2,0,0) coordinate (b2) -- (0,-2,0) coordinate (b3) -- cycle
(a1) -- (1,1,1) coordinate (v) -- (b1) -- cycle
(a2) -- (v) -- (b2) -- cycle
(a3) -- (b3)
;
}
\begin{document}

\title{The Number of Holes in the Union of Translates \\ 
  of a Convex Set in Three Dimensions
  \thanks{B.~Aronov is supported by NSF
    grants CCF-11-17336 and CCF-12-18791.
    O.~Cheong and M.~G.~Dobbins are supported by NRF grant 2011-0030044
    (SRC-GAIA) from the government of Korea.}
} 

\author{Boris Aronov%
  \thanks{Polytechnic School of Engineering, NYU, Brooklyn, New York, USA. 
    \email{boris.aronov@nyu.edu}}
  \and 
  Otfried Cheong%
  \thanks{KAIST, Daejeon, South Korea.
    \email{otfried@kaist.edu}}
  \and
  Michael Gene Dobbins%
  \thanks{Department of Mathematical Sciences, Binghamton University,
      Binghamton, NY, USA.
    \email{michaelgenedobbins@gmail.com}}
  \and 
  Xavier Goaoc%
  \thanks{Universit\'e Paris-Est Marne-la-Vall\'ee,  France.
    \email{goaoc@upem.fr}}}

\maketitle

\begin{abstract}
  We show that the union of $n$ translates of a convex body 
  in~$\R^3$ can have~$\Theta(n^3)$ holes in the worst case, where a
  \emph{hole} in a set~$X$ is a connected component of $\R^3 \setminus
  X$. This refutes a 20-year-old conjecture. As a consequence, we
  also obtain improved lower bounds on the complexity of motion
  planning problems and of Voronoi diagrams with convex distance
  functions.
\end{abstract}

\section{Introduction}

From path planning in robotics~\cite{mp} to the design of
epsilon-nets~\cite{eps-union} to analyzing vulnerabilities in
networks~\cite{AHKS2014Union}, a variety of combinatorial and
algorithmic problems in computational geometry involve understanding
the complexity of the union of $n$ elementary objects.  An abundant
literature studies how this union complexity depends on the geometry
of the objects, and we refer the interested reader to the survey of
Agarwal, Pach and Sharir~\cite{StateOfUnion}.  In the plane, two
important types of conditions were shown to imply near-linear union
complexity: restrictions on the number of boundary
intersections~\cite{DS,UnionOfJordanRegions} and fatness
\cite{fat-convex,fat-triangles}.  In three
dimensions, the former are less relevant as they do not apply to
important examples such as motion planning problems.  
Whether fatness implies low
union complexity in $\R^3$ has been identified as an important open
problem in the area~\cite[Problem 4]{CGC42}; to quote Agarwal, Pach
and Sharir~\cite[Section 3.1, \S 2]{StateOfUnion},
\begin{quote}
  ``\emph{A prevailing conjecture is that the maximum complexity of the union
  of such fat objects is indeed at most nearly quadratic. Such a bound
  has however proved quite elusive to obtain for general fat objects,
  and this has been recognized as one of the major open problems in
  computational geometry.}''
\end{quote}

A weaker version of this conjecture asserts that the number of holes
in the union of $n$ translates of a fixed convex body in $\R^3$ is at most
nearly quadratic in $n$. By a \emph{hole} in a subset $X \subseteq
\R^d$ we mean a connected component of $\R^d \setminus X$. This is
indeed a weakening since the number of holes is a lower bound on the
complexity and a family of translates can be made quite fat by
applying a suitable affine transformation. Evidence in support of the
weaker conjecture is that in two dimensions, the union of $n$
translates has at most a linear number of
holes~\cite{UnionOfJordanRegions}, and in three dimensions, the union
of $n$ translates of a convex polytope with $k$ facets has at most
$O(k n^2)$ holes~\cite{OnTranslationalMotionPlanning}, so the number
of holes grows only quadratically for any fixed convex polytope.

Remarkably, we refute the conjecture even in this weaker form. 
We construct a convex body in $\R^3$ that has, for
any~$n$, a family of $n$ translates with $\Theta(n^{3})$ holes in its
union.  This matches the upper bound for
families of arbitrary convex bodies in~$\R^{3}$.

%
%
%
\begin{theorem}
  \label{thrm:cubic voids}
  The maximum number of holes in the union of $n$~translates of a
  compact, convex body in $\R^3$ is~$\Theta(n^3)$.
\end{theorem}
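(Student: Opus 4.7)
My plan is to establish the theorem via matching upper and lower bounds.

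\emph{Upper bound.} The bound $O(n^3)$ follows from the combinatorial complexity of the arrangement of the $n$ boundary surfaces $\partial(K+t_i)$ in $\R^3$: any two such boundaries meet in a curve of bounded combinatorial complexity, and any three meet in $O(1)$ points, so a standard Euler-formula argument gives $O(n^3)$ three-dimensional cells in the arrangement, and every hole in $\bigcup_i (K+t_i)$ is a disjoint union of such cells.

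\emph{Lower bound strategy.} I would aim to mimic the arrangement of three orthogonal families of $n/3$ thin parallel slabs each, whose union carves space into $\Theta(n^3)$ small box-shaped bounded cavities. To realize this using a single convex body, the plan is to design $K$ with three ``slab-like'' features pointing in three different directions, and to partition the $n$ translates into three groups, each translated primarily along one direction so that, inside a common focal region, the intended feature of $K$ plays the role of a slab in that direction while the other two features are shifted far outside. The presence of three crossing families of slab-like features in the focal region should then bound $\Omega(n^3)$ small components of the complement, one for each triple taken from the three groups.

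\emph{Main obstacle.} The essential difficulty is that convexity forces unwanted bulk: the convex hull of three thin orthogonal features necessarily contains a thick central region that threatens to swallow the would-be holes. Overcoming this requires shaping $K$ and choosing the translation vectors so that, inside the focal region, only the intended thin feature of each translate is present, while the heavy central body and the two unwanted features of each translate are parked in disjoint ``safe'' zones of $\R^3$. I also expect that verifying the $\Omega(n^3)$ holes are genuinely distinct bounded components---as opposed to, say, a single wildly connected cavity with many fingers---will require a global topological argument (for instance, a nerve-theoretic or Mayer--Vietoris computation of the number of connected components of the complement), rather than a purely local analysis near each intended cavity.
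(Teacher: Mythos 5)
Your upper bound argument does not work for general compact convex bodies. The claim that ``any two boundaries meet in a curve of bounded combinatorial complexity, and any three meet in $O(1)$ points'' fails already for polytopes: if $K$ is a polytope with $k$ facets, two translates' boundaries can intersect in a polygonal curve with $\Theta(k)$ edges, and three boundaries can share $\Theta(k)$ points, so the arrangement-of-surfaces bound depends on $k$, not just on $n$ (and the body constructed for the lower bound has \emph{infinitely} many flat faces). The fact that the number of \emph{holes} is nonetheless $O(n^3)$ independently of the body's complexity is a genuine theorem due to Kovalev, and the paper reproves it cleanly by a nerve-theoretic argument: the union of convex sets has the homotopy type of the nerve (a simplicial complex on $n$ vertices), Alexander duality gives that the number of holes equals $\beta_{d-1}(\nerv)+1$, and $\beta_{d-1}$ is trivially bounded by the number of $(d-1)$-simplices, namely $\binom{n}{d}$. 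Cell-counting in the boundary arrangement is the wrong tool here.

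For the lower bound, your high-level instincts are good --- split into three families, worry about convexity forcing bulk, and expect to need a global nerve-theoretic or Mayer--Vietoris count to certify that the components are distinct --- but the specific ``three orthogonal slab-like features in a common focal region'' mechanism is exactly the thing the convexity obstacle kills, and the paper does \emph{not} rescue it; it uses a structurally different device. In the paper, one family ($C_1,\dots,C_m$) contributes $m$ spatially \emph{separated} flat ``lid'' facets $R_k$; on each lid, the other two families ($A_i$ via horizontal edges, $B_j$ via downward vertical cones with apex on the lid's top edge) stamp an $m\times(m-1)$ grid, and each cell of the grid, capped by the lid, becomes a distinct pocket. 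Thus the count is $m$ lids $\times$ $\Theta(m^2)$ cells per lid, not $m\times m\times m$ orthogonal slab intersections; only two thin-feature families ever meet in any one focal region, which is what avoids the convex-hull bulk problem you correctly worried about. A further issue your plan does not address: the theorem requires a \emph{single} body $K$ working for all $n$, and no constant-complexity body can achieve $\Theta(n^3)$ (for polytopes with $k$ facets the known bound is $O(kn^2)$); the paper's universal $K$ is built from two infinite convex polygonal chains and deliberately has unbounded combinatorial complexity. The final counting is indeed done, as you anticipated, by computing $\beta_2$ of the nerve explicitly, giving $m^3-m$ and hence $m^3-m+1$ holes for $3m$ translates.
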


We  start with  a warm-up  example  illustrating the  idea behind  our
construction    (Section~\ref{s:ex1}).     We   then    construct    a
polytope~$K_{m}$,  tailored to  the value  of~$m$ considered  and with
$\Theta(m^2)$ faces;  we give a  family of $3m$~translates  of $K_{m}$
(in  Section~\ref{s:ex2}) whose  union has  $\Theta(m^3)$ holes.   The
final   step  of   our  construction   is  to   turn  the   family  of
polytopes~$(K_{m})$  into  a  limiting ``universal''  convex  body~$K$
that,   for   any~$m$,   admits  $3m$~translates   whose   union   has
$\Theta(m^3)$~holes.   We prove  this  formally  using arguments  from
algebraic topology, developed in Section~\ref{sec:nerve}.

\paragraph*{Further consequences.}

We conclude this introduction with two examples of problems in
computational geometry whose underlying structure involves a union of
translates of a convex body, and on which Theorem~\ref{thrm:cubic
  voids} casts some new light.

A \emph{motion-planning problem} asks whether an object, typically in
$\R^2$ or $\R^3$, can move from an initial position to a final
position by a sequence of elementary motions while remaining disjoint
from a set of obstacles (and to compute such a motion when it exists).
This amounts to asking whether two given points lie in the same
connected component of the \emph{free space}; that is, the set of
positions of the object where it intersects no obstacle. When the
motions are restricted to translations, the free space can be obtained
by taking the complement of the union of the ``expansion'' (formally:
the \emph{Minkowski sum}) of every obstacle by the reflection of the
object through the origin.  In the simplest case the mobile object is
convex, the obstacles consist of~$n$ points and the free space is the
complement of the union of $n$ translates of a convex body;
Theorem~\ref{thrm:cubic voids} implies that already in this case the
free space can have large complexity:

\begin{corollary}
  There exists a set $P$ of $n$ point obstacles and a convex body $K$
  in~$\R^3$ such that the free space for moving $K$ by translations
  while avoiding $P$ has~$\Theta(n^3)$ connected components.
\end{corollary}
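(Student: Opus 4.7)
The plan is to recast the free-space complexity as a union-of-translates problem and then invoke Theorem~\ref{thrm:cubic voids}. The mobile convex body~$K$ placed at translation~$v \in \R^3$ occupies $v + K$, and $v + K$ contains an obstacle point~$p$ if and only if $v \in p + (-K)$. Consequently, the free space equals
\[
F \;=\; \R^3 \setminus \bigcup_{p \in P}\bigl(p + (-K)\bigr),
\]
which is the complement of the union of $n$~translates of the single compact convex body~$-K$. Its connected components are therefore exactly the holes of that union, together with one unbounded component.

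For the lower bound, I would apply Theorem~\ref{thrm:cubic voids} to produce a compact convex body~$K_0$ and $n$ translation vectors $v_1,\ldots,v_n$ such that $\bigcup_{i=1}^n (v_i + K_0)$ has $\Omega(n^3)$~holes. Setting $K := -K_0$ and $P := \{v_1,\ldots,v_n\}$ then yields a configuration of point obstacles and a mobile convex body whose free space has $\Omega(n^3)$~connected components. The matching upper bound $O(n^3)$ follows from the upper bound direction of Theorem~\ref{thrm:cubic voids} applied to the translates of~$-K$, with one extra unbounded component added.

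There is essentially no obstacle beyond the bookkeeping between the Minkowski-sum viewpoint and the hole-counting viewpoint; the only observation needed is that the reflection of a compact convex body through the origin is again a compact convex body, so that Theorem~\ref{thrm:cubic voids} applies directly to~$-K$. The corollary is thus a straightforward restatement of the main theorem in motion-planning language.
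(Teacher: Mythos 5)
Your proposal is correct and matches the paper's own (informal) argument: the free space is the complement of $\bigcup_{p\in P}\bigl(p+(-K)\bigr)$, i.e.\ of a union of $n$ translates of the reflected body, so Theorem~\ref{thrm:cubic voids} applied to $-K_0$ gives $\Theta(n^3)$ connected components. One small bookkeeping remark: the paper defines a \emph{hole} as any connected component of the complement, including the unbounded one, so the connected components of the free space are exactly the holes of the union --- there is no extra ``$+1$'' to add --- but this does not affect the asymptotic conclusion.
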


The \emph{Voronoi diagram} of a family $P$ of points $p_1, p_2,
\ldots, p_n$, called \emph{sites}, in a metric space $X$ is the
partition of $X$ according to the closest $p_i$. A subset $Q \subseteq
P$ \emph{defines a face} of the diagram if there exist some point $x
\in X$ at equal distance from all sites of $Q$, and strictly further
away from all sites in $P\setminus Q$.  A case of interest is when $X$
is $\R^d$ equipped with a \emph{convex distance function} $d_U$
defined by a convex unit ball~$U$ (in general, $d_U$ is not a metric,
as $U$ needs not be centrally symmetric).  A \emph{face} of the
Voronoi diagram with respect to $d_U$ is defined to be a
connected component of the set $\{ x : d(x,q) \leq d(x,p), \text{ for
  all } q \in Q, p \in P \}$ for some $Q \subset P$ defining a face; the
\emph{complexity} of a Voronoi diagram is measured by the
number of its faces of all dimensions.  In $\R^2$, the complexity of
the Voronoi diagram of $n$ point sites with respect to~$d_U$
is~$O(n)$, independent of the choice of
$U$~\cite{a-vdsfg-91,f-vddt-97}. 

The state of knowledge is less satisfactory in three dimensions where
we know near-quadratic complexity bounds for the Voronoi diagram with respect to $d_U$ if $U$
is a constant complexity polytope~\cite{ComplexityOfSubstructures},
and that by making $U$ sufficiently complicated one can have four point
sites define arbitrarily many Voronoi vertices (that is, isolated
points at equal distance from these four sites)~\cite{iklm-95}.  If we grow
equal-radii balls (for~$d_U$) centered in each of the~$p_i$
simultaneously, every hole in the union of these balls must contain a
Voronoi vertex. 
Theorem~\ref{thrm:cubic voids} therefore implies that one can also
find $\Omega(n^3)$ different quadruples of point sites defining
Voronoi vertices.

\begin{corollary}
  There exist a convex distance function $d_U$ and a set $P$ of $n$
  points in $\R^3$ such that $\Omega(n^3)$ different quadruples of $P$
  define a Voronoi vertex in the Voronoi diagram of $P$ with respect
  to~$d_U$.
\end{corollary}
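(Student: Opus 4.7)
The plan is to realize the union from Theorem~\ref{thrm:cubic voids} as a union of unit balls in a convex distance function, and then argue that each hole forces a Voronoi vertex, each defined by its own quadruple. Let $K \subset \R^3$ be the convex body produced by Theorem~\ref{thrm:cubic voids}, translated so that the origin lies in its interior, and let $v_1,\ldots,v_n$ be the translation vectors of $n$ translates whose union has $\Omega(n^3)$ holes. Taking $U := K$ as the unit ball of $d_U$ and $P := \{v_1,\ldots,v_n\}$ as the sites, the closed unit $d_U$-ball centered at $v_i$ is precisely $v_i + K$, so the union of the $n$ unit balls around the sites is exactly the union of Theorem~\ref{thrm:cubic voids}, and still has $\Omega(n^3)$ holes.

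Next I would associate a Voronoi vertex with each hole. For a bounded hole $H$, set $f(x) := \min_i d_U(x, v_i)$; then $f \equiv 1$ on $\partial H$ and $f > 1$ throughout $H$, so $f$ attains its maximum $r^* > 1$ at some interior point $x^*\in H$. Let $Q := \{v_i : d_U(x^*, v_i) = r^*\}$ be the active set at $x^*$. A standard convex-analysis argument---applied to the maximum of a minimum of the convex functions $d_U(\cdot, v_i)$ in $\R^3$---forces $|Q| \ge 4$: otherwise one could find a direction along which every active $d_U(\cdot, v_i)$ strictly increases, contradicting local maximality of $f$ at $x^*$. Thus $x^*$ is equidistant from (and strictly closer to) some four sites $\{p_1,p_2,p_3,p_4\}\subseteq Q$, so this quadruple defines a $0$-dimensional Voronoi face at $x^*$, i.e., a Voronoi vertex.

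Finally I would argue that the $\Omega(n^3)$ resulting quadruples (one per hole) may be taken to be distinct. Since the holes are pairwise disjoint, the points $x^*$ are already distinct in $\R^3$, but I still need distinctness of the associated four-element subsets of $P$. To secure this, I would perturb both $K$ and the translation vectors slightly into general position for $d_U$: the topological robustness of the nerve-based construction from Section~\ref{sec:nerve} ensures that $\Omega(n^3)$ holes survive small perturbations, while the perturbation can simultaneously be chosen so that every point of $\R^3$ is equidistant from at most four sites of $P$ and no quadruple has more than one equidistant point near any given location. Under this genericity the assignment $H \mapsto \{p_1,p_2,p_3,p_4\}$ is injective, yielding $\Omega(n^3)$ distinct quadruples, each defining a Voronoi vertex.

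The main obstacle I expect lies in this last step, the distinctness of the quadruples. The reference~\cite{iklm-95} shows that under a pathologically chosen convex distance function a single quadruple can define arbitrarily many Voronoi vertices, so without a genericity argument the hole-to-quadruple assignment could collapse $\Omega(n^3)$ holes onto only $o(n^3)$ quadruples. Verifying that a small perturbation of the highly tailored body $K$ produced by the construction of Theorem~\ref{thrm:cubic voids} simultaneously preserves the cubic number of holes \emph{and} breaks all non-generic coincidences in the Voronoi diagram is the delicate technical point, and is precisely where the topological stability tools developed in Section~\ref{sec:nerve} should do the work.
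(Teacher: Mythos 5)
Your reduction is exactly the one the paper has in mind: take $U=K$ (up to reflection, depending on the convention for $d_U(x,\cdot)$), let $P$ be the translation vectors, observe that the unit balls around the sites reproduce the union of Theorem~\ref{thrm:cubic voids}, and extract from each bounded hole a local maximum of $x\mapsto\min_i d_U(x,p_i)$, which is a point equidistant from at least four sites and strictly farther from the rest. The paper states this in two sentences in the introduction and gives no further proof, so up to the level of detail you supply about local maxima of a minimum of convex functions, you are on the paper's route.

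The distinctness of the quadruples, which you correctly single out as the crux, is where your proposal does not close the gap it identifies. First, your genericity condition is local (``no quadruple has more than one equidistant point \emph{near any given location}''), so it does not exclude the global coincidence you are worried about: two holes far apart whose deepest points happen to have the same active quadruple, which is precisely the phenomenon of~\cite{iklm-95}. Second, perturbing the \emph{sites} cannot repair this: for a fixed $U$, the set of points equidistant from a quadruple moves continuously with the sites, so a quadruple with several equidistant points keeps them under small perturbation; one would have to perturb $U$ itself, and then the claim that the carefully tuned incidence structure of $K$ (hence the $\Omega(n^3)$ holes) survives is exactly as hard as what you are trying to avoid proving. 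A cleaner repair uses the structure of the construction rather than genericity: the deepest point of the hole indexed by $(i,j,k)$ is reached, in the growing-balls process, only by balls whose boundaries bound that shrinking hole, so its active set is contained in the $O(1)$ translates bounding the hole (essentially $A_i,A_{i+1},B_j,B_{j+1},C_k$). Since a given quadruple is contained in the bounding set of only $O(1)$ holes, the map from holes to active quadruples is $O(1)$-to-one, and $\Omega(n^3)$ distinct quadruples follow without any perturbation. (To be fair, the paper itself asserts distinctness without argument, so this is a gap you share with the source rather than one you introduced.)
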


\paragraph{Notation.} 

For an integer $n$ we let $[n]$ denote the set $\{1,\dots,n\}$. If
$A$ and $B$ are two subsets of $\R^d$ we denote by $A \minksum B$
their \emph{Minkowski sum}
\[ 
A \minksum B = \{\vec{a}+\vec{b} : \vec{a} \in A, \vec{b} \in B\}
\]
and by $\conv(A)$ the \emph{convex hull} of $A$. 


We refer to coordinates in three space by $x,y,z$. 
We refer to the positive $z$-direction as ``upward,'' and to the
positive $y$-direction as ``forward.''  For an object
$A \subset \R^3$, the ``front'' boundary of~$A$ is the upper
$y$-envelope.  Similarly, the ``back'' boundary is the lower
$y$-envelope.

\paragraph{Remark on figures.} 

The reader should take heed that in several places we give explicit
coordinates for a construction and provide a figure, where the figure
depicts the qualitative geometric features of interest using different
coordinates.  Using the coordinates chosen for convenience of
computation would have resulted in figures where features are too
small to see.

\FloatBarrier

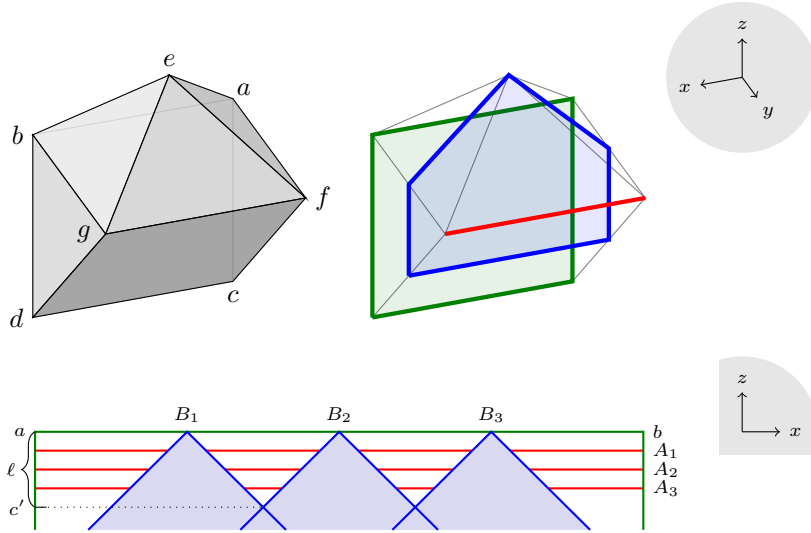
\begin{figure}[ht]
  \centerline{\begin{tikzpicture}
\begin{yzxcoords}

\fill[black!10]
(5.5cm,1.5cm) coordinate (o) circle (1cm)
;

\path
(o)
+(0,0,.4) node (x) {\ns$x$}
+(.5,0,0) node (y) {\ns$y$}
+(0,.4,0) node (z) {\ns$z$}
;

\draw[thin,->] (o) -- (x);
\draw[thin,->] (o) -- (y);
\draw[thin,->] (o) -- (z);

\begin{scope}
\clip
(5.5cm,-3.2cm) coordinate (o) ++(-.3cm,-.3cm) rectangle ++(1.8cm,1.8cm)
;
\fill[black!10]
(o) circle (1cm)
;
\end{scope}

\path
(o)
+(.7cm,0cm) node (x) {\ns$x$}
+(0cm,.7cm) node (z) {\ns$z$}
;

\draw[thin,->] (o) -- (x);
\draw[thin,->] (o) -- (z);

\matrix[column sep=.4cm]{

\begin{scope}[scale=.7]
\polytopeB{very thin, black!20}
\end{scope}

\node[above right, shift={(-2pt,-2pt)}] at (b1) {$\vec{a}$};

\draw[fill=black!15,fill opacity=.5]
(a1) -- (v) -- (a2) -- cycle
;

\draw[fill=black!30,fill opacity=.5]
(a2) -- (v) -- (b2) -- cycle
;

\draw[fill=black!25,fill opacity=.5]
(a1) -- (a2) -- (a3) -- cycle
;

\draw[fill=black!45,fill opacity=.5]
(b1) -- (v) -- (b2) -- cycle
;

\draw[fill=black!70,fill opacity=.5]
(a2) -- (b2) -- (b3) -- (a3) -- cycle
;

\node[left] at (a1) {$\vec{b}$};
\node[left] at (a3) {$\vec{d}$};
\node[below] at (b3) {$\vec{c}$};
\node[above] at (v) {$\vec{e}$};
\node[right] at (b2) {$\vec{f}$};
\node[left, shift={(-2pt,0pt)}] at (a2) {$\vec{g}$};

&

\begin{scope}[scale=.7]
\polytopeB{thin, gray}

\draw[ultra thick, green!50!black, fill=green!50!black, fill opacity=0.1]
(a3) -- (a1) -- (b1) -- (b3) -- (a3)
;

\draw[ultra thick, blue, fill=blue, fill opacity=0.1]
(1,-1,2) -- (1,0,2) -- (v) -- (1,0,0) -- (1,-1,0) -- (1,-1,2)
;

\draw[ultra thick, red]
(a2) -- (b2)
;
\end{scope}

\\
};

\end{yzxcoords}

\begin{scope}[shift={(-2.8cm,-4.5cm)}]

\draw[thick, red]
(-1,1.05) -- (7,1.05) coordinate (f1)
(-1,.8) -- (7,.8) coordinate (f2)
(-1,.55) -- (7,.55) coordinate (f3)
;

\fill[blue!50!gray!20]
(-.3,0) -- (1,1.3) -- (2.3,0)
(1.7,0) -- (3,1.3) -- (4.3,0)
(3.7,0) -- (5,1.3) -- (6.3,0)
;

\draw[thick, green!50!black]
(-1,0) -- (-1,1.3) coordinate (a) -- (7,1.3) -- (7,0)
;

\draw[thick, blue]
(-.3,0) -- (1,1.3) -- (2.3,0)
(1.7,0) -- (3,1.3) -- (4.3,0)
(3.7,0) -- (5,1.3) -- (6.3,0)
;

\path (-1,.3) coordinate (c0);
\node[left] at (c0) {\ns $\vec{c}'$};
\node[left] at (-1,1.3) {\ns $\vec{a}$};
\node[right] at (f1) {\ns $A_1$};
\node[right] at (f2) {\ns $A_2$};
\node[right] at (f3) {\ns $A_3$};
\node[above] at (1,1.3) {\ns $B_1$};
\node[above] at (3,1.3) {\ns $B_2$};
\node[above] at (5,1.3) {\ns $B_3$};
\node[right] at (7,1.3) {\ns $\vec{b}$};

\draw [decorate,decoration={brace,amplitude=5pt,mirror}]
(a) -- (c0) node [midway,xshift=-9pt] {\ns $\ell$}
;

\draw[thin]
(c0) -- ++(.15,0);
\draw[thin, dotted]
(c0) -- (2,.3);

\end{scope}

\end{tikzpicture}}
  \caption{\emph{Top left}: The convex polytope $K$ for the
    construction of Section~\ref{s:ex1}; \emph{Top right}: three
    important $(x,z)$ cross-sections of $K$. The cross section in the
    back (green) is the facet $F$ that touches many holes, the one in the
    center (blue) is used to produce vertical cones, and the one in
    the front (red) is used to produce horizontal edges.
    \emph{Bottom}: The grid formed on the facet $F$ by horizontal
    segments and vertical cones. \label{f:KandSlices}}
\end{figure}

\begin{figure}[htb]
\vspace{-1.5cm}
\centering
\begin{tikzpicture}
\begin{yzxcoords}

\begin{scope}[scale=.8]

\draw[very thin,black!20]
(0,0,0) coordinate (b1) -- (2,0,0) coordinate (b2) -- (0,-2,0) coordinate (b3) -- cycle
(1,1,1) coordinate (v0)
\foreach \i in {1,2}
{ -- ++(.4,-.4,.4) coordinate (x\i) -- ++(-.4,.4,.4) coordinate (v\i) }
(v0) 
\foreach \i in {1,2}
{ -- ++(-.4,-.4,.4) coordinate (y\i) -- (v\i) }
++(-1,-1,1) coordinate (a1) -- +(2,0,0) coordinate (a2) -- ++(0,-2,0) coordinate (a3) -- cycle
(a1) -- (v2) -- (a2)
(b1) -- (v0) -- (b2)
\foreach \i in {1,...,3}
{ (a\i) -- (b\i) }
;

\draw[fill=black!15,fill opacity=.5]
(a1) -- (v2) -- (a2) -- cycle
(y2) -- (v1) -- (x2) -- cycle
(y1) -- (v0) -- (x1) -- cycle
;

\draw[fill=black!45,fill opacity=.5]
(y2) -- (v2) -- (x2) -- cycle
(y1) -- (v1) -- (x1) -- cycle
(b1) -- (v0) -- (b2) -- cycle
;

\draw[fill=black!30,fill opacity=.5]
(a2) -- (v2) -- (x2) -- (v1) -- (x1) -- (v0) -- (b2) -- cycle
;

\draw[fill=black!25,fill opacity=.5]
(a1) -- (a2) -- (a3) -- cycle
;

\draw[fill=black!70,fill opacity=.5]
(a2) -- (b2) -- (b3) -- (a3) -- cycle
;


\node[above] at (v0) {\ns $B_3$};
\node[above] at (v1) {\ns $B_2$};
\node[above] at (v2) {\ns $B_1$};

\end{scope}
\end{yzxcoords}
\end{tikzpicture}
\hspace{1cm}
\begin{tikzpicture}
\begin{yzxcoords}
\begin{scope}[scale=.8]

\draw[thin]
(0,0,2) coordinate (a1)
 -- (2,0,2) coordinate (c1) -- ++(-.6,-.6,0) coordinate (e1)
 -- (2,-.6,2) coordinate (c2) -- ++(-.6,-.6,0) coordinate (e2)
 -- (2,-1.2,2) coordinate (c3)
 -- (0,-3.2,2) coordinate (a3) -- cycle
(0,0,0) coordinate (b1)
 -- (2,0,0) coordinate (d1) -- ++(-.6,-.6,0) coordinate (f1)
 -- (2,-.6,0) coordinate (d2) -- ++(-.6,-.6,0) coordinate (f2)
 -- (2,-1.2,0) coordinate (d3)
 -- (0,-3.2,0) coordinate (b3)
(c2) -- ++(-.3,.3,-.3) coordinate (x1) -- (e1)
(c3) -- ++(-.3,.3,-.3) coordinate (x2) -- (e2)
(d2) -- ++(-.3,.3,.3) coordinate (y1) -- (f1)
(d3) -- ++(-.3,.3,.3) coordinate (y2) -- (f2)
(x1) -- (y1)
(x2) -- (y2)
(a1) -- (1,1,1) coordinate (v) -- (b1)
(c1) -- (v) -- (d1) -- cycle
(c2) -- (d2)
(c3) -- (d3)
(a3) -- (b3)
;

\draw[very thin,black!20]
(a1) -- (b1) -- (b3)
;

\draw[fill=black!15,fill opacity=0.5]
(a1) -- (c1) -- (v) -- cycle
(e1) -- (c2) -- (x1) -- cycle
(e2) -- (c3) -- (x2) -- cycle
;

\draw[fill=black!30,fill opacity=0.5]
(c1) -- (d1) -- (v) -- cycle
(c2) -- (d2) -- (y1) -- (x1) -- cycle
(c3) -- (d3) -- (y2) -- (x2) -- cycle
;

\draw[fill=black!25,fill opacity=0.5]
(a1) -- (c1) -- (e1) -- (c2) -- (e2) -- (c3) -- (a3)
;

\draw[fill=black!70,fill opacity=0.5]
(a3) -- (c3) -- (d3) -- (b3) -- cycle
(e2) -- (c2) -- (d2) -- (f2) -- (y2) -- (x2) -- cycle
(e1) -- (c1) -- (d1) -- (f1) -- (y1) -- (x1) -- cycle
;

\draw[fill=black!45,fill opacity=0.5]
(b1) -- (d1) -- (v) -- cycle
(f2) -- (d3) -- (y2) -- cycle
(f1) -- (d2) -- (y1) -- cycle
;


\node[right] at (d1) {\ns $A_1$};
\node[right] at (d2) {\ns $A_2$};
\node[right] at (d3) {\ns $A_3$};

\end{scope}
\end{yzxcoords}
\end{tikzpicture}

  \caption{The union of the translates $B_1, B_2, \ldots, B_m$ (left)
    and $A_1, A_2, \ldots, A_m$ (right). The spacing between the
    translates is exaggerated for clarity.\label{fig:union}}
\end{figure}
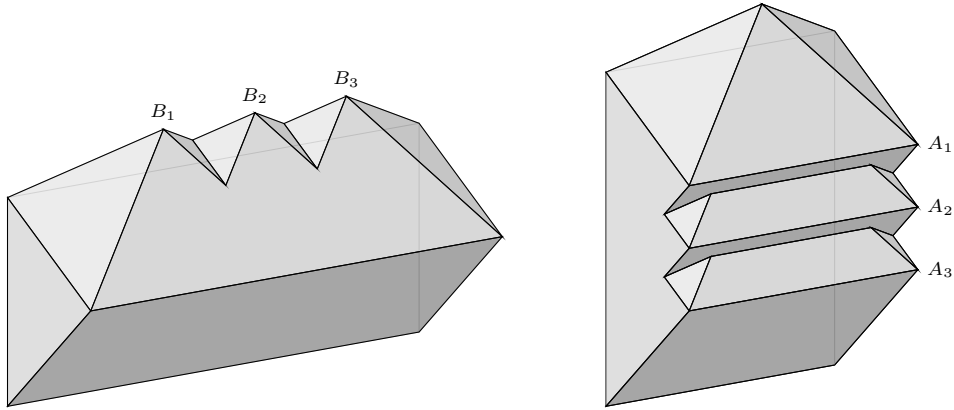

\section{Many holes touching a single facet}
\label{s:ex1}

We first introduce the key idea of our constructions
with a simpler goal: constructing $2m+1$ translates of a convex
polytope with $\Omega(m^2)$ holes in the union, all incident to a
common facet.

Let $K$ be the polytope depicted in Figure~\ref{f:KandSlices},
the convex hull of the following seven points:
\begin{gather*} 
  \vec{a} = (0,0,0),\, \vec{b}= (1,0,0),\, \vec{c}=(0,0,-1),
  \vec{d}=(1,0,-1), \\
 \vec{e}=\left(\half,\half,\half\right),\, \vec{f}=(0,1,0),\, \hbox{
   and } \vec{g}=(1,1,0).
\end{gather*}

We fix an integer $m>0$ and let $C = K$ be the trivial translate of $K$.  Let
$F$ denote the facet of $C$ with vertices $\vec{a},\vec{b},\vec{c},\vec{d}$.  We then
pick $m$ translates of $K$, denoted $B_1, B_2, \ldots, B_m$, whose top-most
vertices (corresponding to vertex $\vec{e}$) are placed regularly
along the edge $\vec{ab}$ of $F$.  The top part of the
intersection~$B_j \cap F$ is a triangular region, shown in blue in
Figure~\ref{f:KandSlices}~(bottom).  
The union $\bigcup_{j=1}^m B_j$ bounds $m-1$ 
regions below the edge $\vec{ab}$ of $F$. Let $\ell$ denote the 
height of one such region; that is the distance between $B_1 \cap B_2$ and the edge $\vec{ab}$ of $F$. 
Let $\vec{c}'$ denote the point on the segment 
$\vec{ac}$ of $F$ at distance $\ell$ from $\vec{a}$ (see again
Figure~\ref{f:KandSlices}~(bottom)). We next pick $m$ translates of
$K$, denoted $A_1, A_2, \ldots, A_m$, whose vertices corresponding to $\vec{f}$ are
placed regularly along the segment $ac'$.

The intersections $F \cap B_j$ and $F \cap A_i$, for $i,j \in [m]$,
form a grid in $F$ with $m(m-1)$ holes on~$F$.  Since two
consecutive~$A_{i}$ leave only a narrow tunnel incident to~$F$, and each
$B_{j}$ entirely cuts this tunnel, each of the holes on~$F$ is
indeed on the boundary of a distinct hole in the union of all
translates in~$\mb{R}^3$.
\begin{claim}
  \label{prop}
  The union $U = \bigcup_{i=1}^m A_i \cup
  \bigcup_{j=1}^m B_j \cup C$ has $\Omega(m^2)$ holes, all
  touching the facet $F$ of $C$.
\end{claim}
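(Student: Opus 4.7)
The plan is to trace out the arrangement cut out on the facet $F$ by the translates, count the planar regions this produces, and then promote each bounded planar region to a distinct 3D hole of $U$ just behind $F$. I would begin by computing from the defining inequalities of $K$ that $K\cap\{y=\half\}$ is a pentagon shaped like a house---a unit-square base with a triangular roof whose apex is the top-most point---and that $K\cap\{y=1\}$ is the segment $\vec{fg}$. Translating as prescribed, each $B_j\cap F$ is then a pentagon with triangular roof apex at the point $(t_j,0,0)$ on $\vec{ab}$, while each $A_i\cap F$ is a horizontal segment at height $z=-s_i$ running all the way across $F$.

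From this description, the pentagon bases together cover the lower strip $z\in[-1,-\half]$ of $F$ completely, while the pentagon roofs tile the upper strip $z\in[-\Delta/2,0]$ except for $m-1$ triangular gaps, one between each pair of consecutive apexes on $\vec{ab}$, where $\Delta$ denotes the $x$-spacing of the $t_j$'s. By the choice $\ell\leq\Delta/2$, every horizontal $A$-segment crosses every gap triangle, so the $m$ segments jointly subdivide each gap triangle into $m+1$ stacked horizontal sub-strips, yielding $(m-1)(m+1)$ regions on $F$ in total.

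The key remaining step is to upgrade these planar regions to distinct bounded 3D cells of $\R^3\setminus U$. For each strip $R$ I would pick a test point $p_R$ slightly behind its centroid (at $y=-\eps$ for $\eps>0$ small) and verify $p_R\notin U$ by continuity. For the $m$ strips of each gap that lie strictly below the edge $\vec{ab}$, I would then argue that $p_R$ lies in a bounded component of $\R^3\setminus U$: the cell is closed on the front by $F$, on the sides by the convex bodies $B_j$ and $B_{j+1}$ (whose roof cross sections cover the $x$-range of $R$ at every $y\in(-\half,0)$), and in the remaining directions by $A_i$ and $A_{i+1}$ (or, for the bottommost strip, by $A_m$ above and $B_j\cap B_{j+1}$ below), whose walls pinch the cell shut at $y=-(s_{i+1}-s_i)/2$. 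Distinctness of the $m(m-1)$ resulting cells is inherited from the planar separation of the $R$'s on $F$: two adjacent strips sharing an $A_i$-arc are kept separate by $A_i$, and two sharing a $B_j$-arc by $B_j$. The topmost strip in each gap, adjacent to $\vec{ab}$, must be excluded from the count since its 3D region leaks upward into the unbounded half-space $\{z>0\}$ above the top facet of $C$ and is therefore not a bounded hole.

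The hardest part will be ruling out a ``back-door'' path that loops around the backs of the translates---past $y=-\half$, where the $B_j$'s terminate, or past $y=-1$, where the $A_i$'s terminate---and reconnects two apparently separated cells. For this I would rely on the fact that, as $|y|$ grows, the cross sections of the convex polytopes $A_i$ and $B_j$ expand monotonically and eventually merge into a single connected blob covering the relevant $(x,z)$-region, so that the planar separation of the $F$-regions propagates faithfully into 3D and no such detour exists.
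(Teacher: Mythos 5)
The paper explicitly declines to give a formal proof of this warm-up claim (``Since this is only a warm-up example, we do not include a formal proof''); all it offers is a two-sentence sketch: two consecutive $A_i$ leave a narrow tunnel incident to $F$, and each $B_j$ cuts the tunnel. Your proposal is a more fleshed-out version of exactly this geometric picture (slice $K$ at $y=\half$ and $y=1$, trace the arrangement on $F$, pinch the cells off behind), so you are on the same track as the informal sketch. A few small slips: the base of $K\cap\{y=\half\}$ is a $1\times\half$ rectangle (from $z=0$ down to $z=-\half$), not a unit square; and the parenthetical that $B_j$ and $B_{j+1}$ ``cover the $x$-range of $R$ at every $y\in(-\half,0)$'' is not literally true near $y=0$---the gap triangle persists until roughly $y=-\ell$---but this does not matter because, as you correctly compute, the $A_i$-walls already close the cell much earlier, at $y\approx -(s_{i+1}-s_i)/2$. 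You are also right that the last step---ruling out back-door paths around $y=-\half$ or $y=-1$, or around the ends $x=0,1$---is the genuinely delicate part, and it is precisely what the paper chooses not to wrestle with here. It is worth noting that when the paper does prove the analogous claim rigorously (for the universal body, Theorem~\ref{thrm:cubic voids}), it abandons this direct geometric argument entirely: it identifies the nerve of the family (Lemma~\ref{l:nerve}), converts hole-counting to computing $\beta_{2}$ of the nerve via the nerve theorem and Alexander duality (Lemma~\ref{l:hole-betti}), and then evaluates $\beta_2$ by a rank--nullity computation on the boundary maps. That algebraic route avoids boundedness and back-door considerations altogether. If you wanted to turn your sketch of Claim~\ref{prop} into a rigorous proof, the cleanest path would be to do the same: determine $\nerv(\{A_1,\dots,A_m,B_1,\dots,B_m,C\})$ explicitly and compute its second Betti number, rather than trying to chase 3D cells directly.
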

Since this is only a warm-up example, we do not include a formal proof
of the claim.

\FloatBarrier

\section{The construction of~$K_m$}
\label{s:ex2}

The construction of Section~\ref{s:ex1} uses three features of $K$: a
portion of a cone with apex $\vec{e}$, a portion of a prism with edge
$\vec{fg}$, and a facet $F$. We combined the $A_i$'s and the
$B_j$'s in a grid-like structure that created $\Theta(m^2)$ local
minima in the $y$-direction on the front boundary of the union $\bigcup A_i
\cup \bigcup B_j$.  Each of these minima is the bottom of a ``pit'',
and $C$ acts as a ``lid'' to turn each pit into a separate hole.  The
construction we use to prove Theorem~\ref{thrm:cubic voids} is based
on a similar principle, but before giving this construction, 
we first fix a value $m > 0$, and then build a convex polytope~$K_{m}$
depending on~$m$ and a family of $3m$ translates with $\Theta(m^3)$ holes.  
The construction of~$K_m$ consists of two parts,
which we refer to as \Front~and \Back.

\paragraph{The auxiliary paths.} 

To construct our polytope $K_m$ we use two auxiliary polygonal
paths~$\eta$ and $\gamma$. They both start at the origin $\vec{0}$.
The path $\eta$ has $m$ edges, lies in the $(y,z)$-plane, and is
convex\footnote{We say a path~$\pi$ is convex in a \emph{direction}
  $u$ if the orthogonal projection of $\pi$ onto $u^\bot$ is injective and the set $\pi + \R^+ u$ is convex.} in both
directions~$(0, -1, 0)$ and $(0, 0, -1)$.  The path $\gamma$ has $m+2$
edges, lies in the $(x,y)$-plane, and is convex in
direction~$(0,-1,0)$.  We denote the $j$th vertex of $\eta$ by
$\vec{w}_{j,0}$ and the $k$th vertex of $\gamma$ by $\vec{w}_{0,k}$,
see the top left of Figure~\ref{f:front}.

\paragraph{The front part.}

We define a ``grid'' of $(m{+}1) \times (m{+}3)$ points, which are the
vertices of the polyhedral surface $\eta \minksum \gamma$
(see the top right of Figure~\ref{f:front}), by putting
\[ 
\vec{w}_{j,k} = \vec{w}_{j,0}+\vec{w}_{0,k} \quad \text{ for }
j\in \{0,1, \ldots, m\} \text{ and } k \in \{0,1, \ldots, m{+}2\}. 
\] 

We then add a point~$v_{j,k}$ on the edge $w_{j,k}w_{j,k+1}$ as follows 
(see the bottom left of Figure~\ref{f:front}):
\[  \vec{v}_{j,k} = \big(1-\tfrac{j}{m+1}\big)\vec{w}_{j,k} +\tfrac{j}{m+1} \vec{w}_{j,k+1}
\quad \text{ for } j\in \{0,1, \ldots, m\} \text{ and } k \in \{0,1, \ldots, m{+}1\}. 
\]
For $j \in \{0,\dots,m\}$ we define two polygonal paths
\begin{align*}
\tpaths_{j} & = w_{j,0} + \gamma = w_{j,0} w_{j,1} w_{j,2} \dots w_{j,m+1} w_{j,m+2} \text{ and}  \\
\hpaths_{j} & = w_{j,0} v_{j,0} v_{j, 1} v_{j,2} v_{j,3}\dots v_{j,m} v_{j, m+1} w_{j,m+2}.
\end{align*}
Note that~$\hpaths_0 = \tpaths_0 = \gamma$, 
that the paths~$\tpaths_{j}$ are simply translates of~$\gamma$, and
that the path~$\hpaths_{j}$ lies entirely in the convex region~$\tpaths_{j} + (0,\R^-,0)$ 
and the vertices of~$\hpaths_{j}$ lie on~$\tpaths_{j}$; 
see Figure~\ref{fig:gammas}.

The front part~$\Front$ of~$K_m$ is the convex hull of the
paths~$\hpaths_{j}$ (see the bottom right of Figure~\ref{f:front}):
\[ 
\Front = \conv\big( \hpaths_0 \cup \hpaths_1 \cup \dots \cup \hpaths_m
\big).
\]
Observe that the paths $\eta$ and $\gamma$ can be chosen so that, for
each $j \in [m]$, the path $\hpaths_j$ lies entirely on the front
boundary as well as on the upper boundary of~$\Front$.

\begin{figure}[htp]
  \centerline{\begin{tikzpicture}
\begin{yxzcoords}

\fill[black!10]
(-2cm,3.5cm) coordinate (o) circle (1cm)
;

\path
(o)
+(0,0,.4) node (z) {\ns$z$}
+(.5,0,0) node (y) {\ns$y$}
+(0,.4,0) node (x) {\ns$x$}
;

\draw[thin,->] (o) -- (x);
\draw[thin,->] (o) -- (y);
\draw[thin,->] (o) -- (z);

\begin{scope}[scale=1.5]

\input{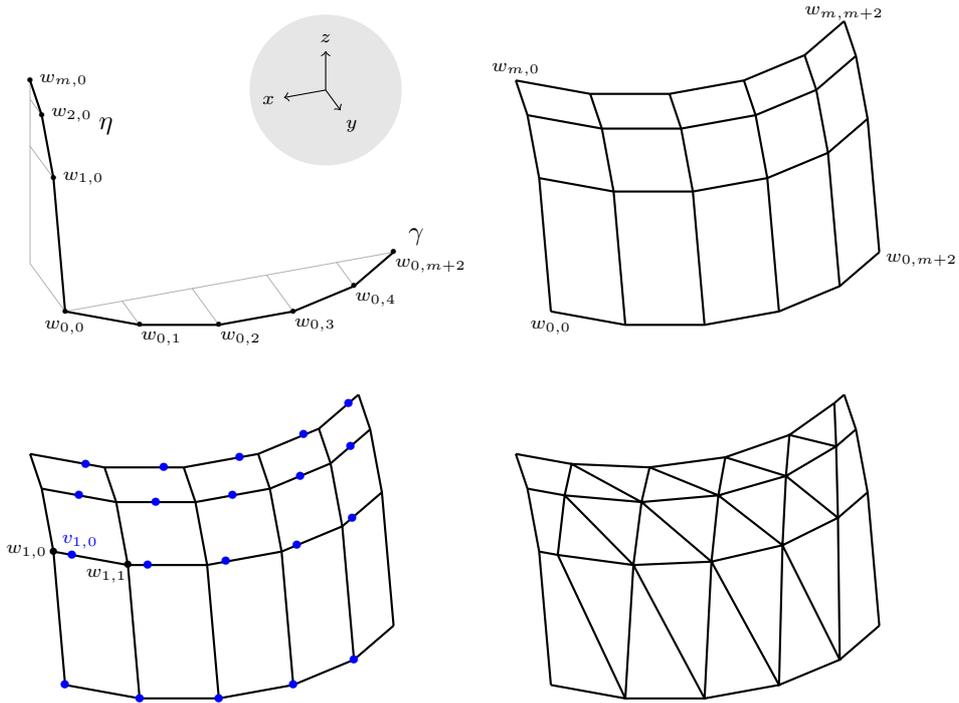}

\path
(w0_0) ++(-.45,0,0) coordinate (a)
;

\matrix [column sep=.1cm, row sep=.5cm]{




\draw[very thin, black!30]
(w5_0) -- (w0_0) -- (a) -- (w0_3)
(w0_1) ++(-1,0,0) coordinate (b)
(w0_1) -- (intersection of b--w0_1 and a--w0_3)
(w0_2) ++(-1,0,0) coordinate (b)
(w0_2) -- (intersection of b--w0_2 and a--w0_3)
\foreach \x in {1,...,4}
{ 
(w\x_0) ++(-1,0,0) coordinate (b)
(w\x_0) -- (intersection of w0_0--w5_0 and b--w\x_0) 
}
;

\draw[thick] 
(w0_0) node[scale=.5] {$\bullet$}
\foreach \y in {1,...,3}
{ -- (w0_\y) node[scale=.5] {$\bullet$} }
(w0_0) 
\foreach \x in {1,...,5}
{ -- (w\x_0) node[scale=.5] {$\bullet$} }
;


\node[below] at (w0_0) {\ns $\vec{w}_{0,0}$};

\foreach \y in {1,...,2}
{\node[right] at (w0_\y) {\ns $\vec{w}_{\y,0}$};}
{\node[right] at (w0_3) {\ns $\vec{w}_{m,0}$};}

\foreach \x in {1,...,4}
{\node[below right,shift={(-3pt,1pt)}] at (w\x_0) {\ns $\vec{w}_{0,\x}$};}
{\node[below right,shift={(-3pt,1pt)}] at (w5_0) {\ns $\vec{w}_{0,m+2}$};}

\node[shift={(1cm,-.6cm)}] at (w0_3) {$\eta$};
\node[shift={(.3cm,.2cm)}] at (w5_0) {$\gamma$};

&

%

\foreach \y in {0,...,3}
{
 \draw[thick]
  (w0_\y)
  \foreach \x in {0,...,5}
   { -- (w\x_\y) }
 ;
}

\foreach \x in {0,...,5}
{
 \draw[thick]
  (w\x_0)
  \foreach \y in {0,...,3}
   { -- (w\x_\y) }
 ;
}

\node[below] at (w0_0) {\ns $\vec{w}_{0,0}$};
\node[above,shift={(0pt,-2pt)}] at (w0_3) {\ns $\vec{w}_{m,0}$};
\node[right,shift={(-1pt,-3pt)}] at (w5_0) {\ns $\vec{w}_{0,m+2}$};
\node[shift={(0pt,3pt)}] at (w5_3) {\ns $\vec{w}_{m,m+2}$};


\\

\foreach \y in {0,...,3}
{
 \draw[thick]
  (w0_\y)
  \foreach \x in {0,...,5}
   { -- (w\x_\y) }
 ;
}

\foreach \x in {0,...,5}
{
 \draw[thick]
  (w\x_0)
  \foreach \y in {0,...,3}
   { -- (w\x_\y) }
 ;
}

\foreach \x in {1,...,5}
{
\foreach \y in {0,...,3}
{
\node[scale=.8] at (v\x_\y) {\color{blue} $\bullet$};
}}

\node[scale=.7] at (w0_1) {$\bullet$};
\node[scale=.7] at (w1_1) {$\bullet$};

\node[shift={(-10pt,0pt)}] at (w0_1) {\ns $\vec{w}_{1,0}$};
\node[shift={(-8pt,-4pt)}] at (w1_1) {\ns $\vec{w}_{1,1}$};
\node[blue,shift={(3pt,5pt)}] at (v1_1) {\ns $\vec{v}_{1,0}$};


& 

%

\path 
(v5_0) coordinate (b);
\foreach \x in {5,...,2}
{
\path 
(b) coordinate (b\x)
(v\x_0) coordinate (a\x) coordinate (b);
}

\path
(v5_1) coordinate (b);
\foreach \x in {5,...,2}
{
  \draw[thick]
  (a\x) -- (v\x_1) -- (b\x) -- cycle;
  \path 
  (b) coordinate (b\x)
  (v\x_1) coordinate (a\x) coordinate (b);
}

\path
(v1_1) coordinate (a1)
(v2_1) coordinate (b1);

\foreach \y in {2,...,3}
{
 \path (v5_\y) coordinate (b); 
 \foreach \x in {5,...,1}
 {
  \draw[thick]
  (a\x) -- (v\x_\y) -- (b\x) -- cycle;
  \path 
  (b) coordinate (b\x)
  (v\x_\y) coordinate (a\x) coordinate (b);
 }
}

\draw[thick]
(v1_0) -- (v2_0) -- (v1_1)
(v1_3)
\foreach \x in {2,...,5}
 { -- (v\x_3)}
;

\draw[thick]
(w0_0)
\foreach \y in {1,...,3}
{ -- (w0_\y)}
;

\draw[thick]
(w5_0)
\foreach \y in {1,...,3}
{ -- (w5_\y)}
;

\foreach \y in {0,...,3}
{
\draw[thick]
(w0_\y) -- (v1_\y)
(w5_\y) -- (v5_\y)
;
}

\\
};


\end{scope}
\end{yxzcoords}
\end{tikzpicture}}
  \caption{Design of $\Front$.\label{f:front}
  \emph{Top left}: The auxiliary paths $\eta$ and $\gamma$.  
  \emph{Top right}: The surface $\eta + \gamma$ defining the ``grid'' of vertices $w_{j,k}$. 
  \emph{Bottom left}: The points $v_{j,k}$. 
  \emph{Bottom right}: The front boundary of $\Front$.
  }
\end{figure}

\begin{figure}[htp]
  \centerline{\begin{tikzpicture}
\clip
(-4cm,15mm) rectangle (5cm,7cm)
;

\begin{scope}
\clip
(4cm,6cm) coordinate (o) ++(-.3cm,-.3cm) rectangle ++(1.8cm,1.8cm)
;
\fill[black!10]
(o) circle (1cm)
;
\end{scope}

\path
(o)
+(.7cm,0cm) node (x) {\ns$x$}
+(0cm,.7cm) node (y) {\ns$y$}
;

\draw[thin,->] (o) -- (x);
\draw[thin,->] (o) -- (y);

\begin{scope}[z={(2pt,2pt)}, x={(0cm,7cm)}, y={(5cm,0cm)}]

\input{common.tex}

\path 
(v5_0) coordinate (b);
\foreach \x in {5,...,2}
{
\path 
(b) coordinate (b\x)
(v\x_0) coordinate (a\x) coordinate (b);
}

\path
(v1_0) coordinate (a1)
(v2_0) coordinate (b1);

\foreach \y in {1,...,3}
{
 \path (v5_\y) coordinate (b); 
 \foreach \x in {5,...,1}
 {
  \draw[very thin,black!15]
  (a\x) -- (v\x_\y) -- (b\x);
  \path 
  (b) coordinate (b\x)
  (v\x_\y) coordinate (a\x) coordinate (b);
 }
}

\draw[thick]
(w0_0)
\foreach \x in {1,...,5}
{--(w\x_0)}
;

\draw[very thick, black!35]
\foreach \y in {1,2,3}
{
(w0_\y)
\foreach \x in {1,...,5}
{--(w\x_\y)}
};

\draw[blue]
\foreach \y in {1,2,3}
{(w0_\y)
\foreach \x in {1,...,5}
 { -- (v\x_\y)}
 --(w5_\y)
};

\foreach \x in {1,...,5}
{
\foreach \y in {0,...,3}
{
\node[scale=.8] at (v\x_\y) {\color{blue} $\bullet$};
}}

\node[above right] at (w0_0) {$\gamma$};
\node[above right,blue] at (w0_1) {$\hpaths_1$};
\node[above right,blue] at (w0_2) {$\hpaths_2$};
\node[above right,blue] at (w0_3) {$\hpaths_m$};
\end{scope}
\end{tikzpicture}}
  \caption{View of the paths~$\hpaths_j$ from above (with the paths $\tpaths_j$ represented in grey).\label{fig:gammas}}
\end{figure}

Let $E_{j,k}$ denote the edge~$w_{j,k}w_{j,k+1}$ and let~$E_k = E_{0,k}$.  Consider the point~$v_{j,k}$ on $E_{j,k}$. 
Since $\hpaths_j$ lies entirely on the upper boundary of~$\Front$, no part of~$\Front$ appears above $E_{j,k}$,
and since $v_{j,k}$ is the only point where the segment~$E_{j,k}$ intersects $\Front$ for $j,k \in [m]$, 
the vertical plane containing~$E_{j,k}$ 
intersects $\Front$ in a downward extending cone with apex~$v_{j,k}$;
see Figure~\ref{fig:vert}.  Note this fails for $k=0,m{+}1$. 
\begin{figure}[htp]
  \centerline{\begin{tikzpicture}
\clip
(-7cm,-2cm) rectangle (7cm,2cm)
;

\begin{yxzcoords}

\input{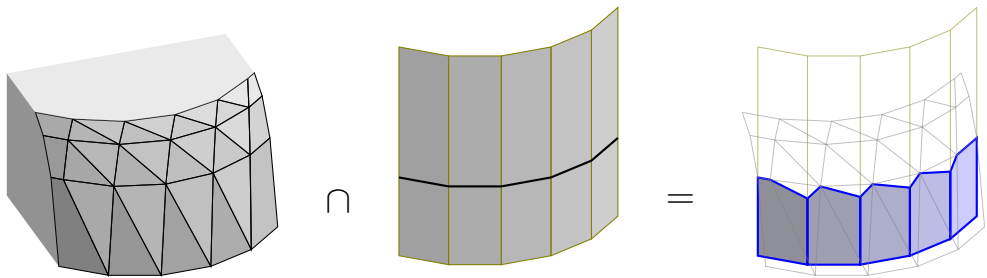}


\path 
(v5_1) coordinate (b)
;
\foreach \x in {4,...,1}
{
\path 
(b) coordinate (b\x)
(v\x_1) coordinate (b)
;
}
\foreach \x in {4,...,1}
{
\path
(w\x_1) ++(-1,0,0) coordinate (a) 
(intersection of b\x--v\x_1 and w\x_1--a) coordinate (c\x)
;
\path
(w\x_1) ++(0,0,-1) coordinate (e)
(intersection of c\x--w\x_0 and w\x_1--e) 
coordinate (d\x)
;
}

\foreach \x in {0,...,5}
\path
(w\x_1) ++(0,0,-1) coordinate (e)
(w\x_0) ++(-1,0,0) coordinate (a)
(intersection of e--w\x_1 and a--w\x_0) coordinate (f\x)
;

\foreach \x in {0,...,5}
{
\path
(w\x_1) ++(0,0,1) coordinate (g\x)
;
}

\path
(w0_0) ++(-1,0,0) coordinate (h0)
++(0,0,.936) coordinate (h1)
(w5_0) ++(-1,0,.936) coordinate (h2)
;

\matrix [column sep=.5cm, row sep=.2cm]{

\path 
(v5_0) coordinate (b);
\foreach \x in {5,...,2}
{
\path 
(b) coordinate (b\x)
(v\x_0) coordinate (a\x) coordinate (b);
}

\path
(v5_1) coordinate (b);
\foreach \x in {5,...,2}
{
  \pgfmathsetmacro\c{.09*\x}
  \definecolor{currentcolor}{hsb}{0, 0, \c}
  \draw[fill=currentcolor,fill opacity=0.5]
  (a\x) -- (v\x_1) -- (b\x) -- cycle;
  \path 
  (b) coordinate (b\x)
  (v\x_1) coordinate (a\x) coordinate (b);
}

\path
(v1_1) coordinate (a1)
(v2_1) coordinate (b1);

\foreach \y in {2,...,3}
{
 \path (v5_\y) coordinate (b); 
 \foreach \x in {5,...,1}
 {
  \pgfmathsetmacro\c{.09*\x+.11*\y}
  \definecolor{currentcolor}{hsb}{0, 0, \c}
  \draw[fill=currentcolor,fill opacity=0.5]
  (a\x) -- (v\x_\y) -- (b\x) -- cycle;
  \path 
  (b) coordinate (b\x)
  (v\x_\y) coordinate (a\x) coordinate (b);
 }
}

\foreach \x in {2,...,5}
{
 \path
 (v\x_0) coordinate (a\x)
 ;
}

\foreach \y in {1,...,3}
{
 \path
 (v1_\y) coordinate (b)
 ;
 \foreach \x in {2,...,5}
 {
  \pgfmathsetmacro\c{.09*\x+.11*\y+.04}
  \definecolor{currentcolor}{hsb}{0, 0, \c}
  \draw[fill=currentcolor,fill opacity=.5]
  (b) -- (a\x) -- (v\x_\y) coordinate (a\x) coordinate (b) -- cycle
  ;
 }
}

\draw[fill=black,fill opacity=0.5]
(w0_0) -- (w0_1) -- (v1_1) -- (w1_0) -- cycle
;
\draw[fill=black!80,fill opacity=0.5]
(w0_1) -- (w0_2) -- (v1_2) -- (v1_1) -- cycle
;
\draw[fill=black!67,fill opacity=0.5]
(w0_2) -- (w0_3) -- (v1_3) -- (v1_2) -- cycle
;

\draw[fill=black!47,fill opacity=0.5]
(w5_0) -- (w5_1) -- (v5_1) -- (w4_0) -- cycle
;
\draw[fill=black!34,fill opacity=0.5]
(w5_1) -- (w5_2) -- (v5_2) -- (v5_1) -- cycle
;
\draw[fill=black!18,fill opacity=0.5]
(w5_2) -- (w5_3) -- (v5_3) -- (v5_2) -- cycle
;

\fill[black!85,fill opacity=0.5]
(w0_0)
\foreach \y in {1,...,3}
{ -- (w0_\y)}
-- (h1) -- (h0) -- cycle
;

\fill[black!16,fill opacity=0.5]
(v1_3)
\foreach \x in {2,...,5}
 { -- (v\x_3)}
-- (w5_3) -- (h2) -- (h1) -- (w0_3) -- cycle
;


&

\node at (0,0,0) {\Large $\cap$};

&



\path
(f5) coordinate (a)
(g5) coordinate (b)
;
\foreach \x in {4,...,0}
{
  \pgfmathsetmacro\c{.09*\x+.26}
  \definecolor{currentcolor}{hsb}{0, 0, \c}
  \draw[olive,fill=currentcolor,fill opacity=0.5]
  (a) -- (f\x) -- (g\x) -- (b) -- cycle;
  \path 
  (f\x) coordinate (a)
  (g\x) coordinate (b)
  ;
}

\draw[thick]
(w0_1)
\foreach \x in {1,...,5}
{-- (w\x_1)}
;

&

\node at (0,0,0) {\Large $=$};

&

\path
(w0_1) coordinate (d0)
(w5_1) coordinate (d5)
(w0_1) coordinate (a)
(f0) coordinate (b)
;
\foreach \x in {1,...,5}
{
  \pgfmathsetmacro\c{.2*\x}
  \definecolor{currentcolor}{hsb}{.66666, .4, \c}
  \draw[blue,thick,fill=currentcolor,fill opacity=0.5]
  (a) -- (v\x_1) -- (d\x) -- (f\x) -- (b) -- cycle;
  \path 
  (d\x) coordinate (a)
  (f\x) coordinate (b)
  ;
}

\foreach \x in {0,...,5}
{
\draw[olive,opacity=0.5]
(d\x) -- (g\x)
;
}

\draw[olive,opacity=0.5]
(g0)
\foreach \x in {1,...,5}
{-- (g\x)}
;

\path 
(v5_0) coordinate (b);
\foreach \x in {5,...,2}
{
\path 
(b) coordinate (b\x)
(v\x_0) coordinate (a\x) coordinate (b);
}

\path
(v5_1) coordinate (b);
\foreach \x in {5,...,2}
{
\draw[very thin,gray,opacity=0.5]
  (a\x) -- (v\x_1) -- (b\x) -- cycle;
  \path 
  (b) coordinate (b\x)
  (v\x_1) coordinate (a\x) coordinate (b);
}

\path
(v1_1) coordinate (a1)
(v2_1) coordinate (b1);

\foreach \y in {2,...,3}
{
 \path (v5_\y) coordinate (b); 
 \foreach \x in {5,...,1}
 {
\draw[very thin,gray,opacity=0.5]
  (a\x) -- (v\x_\y) -- (b\x) -- cycle;
  \path 
  (b) coordinate (b\x)
  (v\x_\y) coordinate (a\x) coordinate (b);
 }
}

\draw[very thin,gray,opacity=0.5]
(v1_0) -- (v2_0) -- (v1_1)
(v1_3)
\foreach \x in {2,...,5}
 { -- (v\x_3)}
;

\draw[very thin,gray,opacity=0.5]
(w0_0)
\foreach \y in {1,...,3}
{ -- (w0_\y)}
;

\draw[very thin,gray,opacity=0.5]
(w5_0)
\foreach \y in {1,...,3}
{ -- (w5_\y)}
;

\foreach \y in {0,...,3}
{
\draw[very thin,gray,opacity=0.5]
(w0_\y) -- (v1_\y)
(w5_\y) -- (v5_\y)
;
}

\\
};

\end{yxzcoords}
\end{tikzpicture}

%



  \caption{Intersection of vertical panels through $E_{1,k}$ with~$\Front$.\label{fig:vert}}
\end{figure}

\FloatBarrier

\paragraph{The back part.}

We now define two vectors, $\ut = w_{m,m+2} - (0,t,0)$, where $t$ is
some positive real number, and $\uz$, the orthogonal projection of
$\ut$ on the $(x,y)$-plane. We define $\Back$ as the Minkowski sum of
the segment $\uz\ut$ and $- \gamma$, the reflection of~$\gamma$ with
respect to the origin (see Figure~\ref{f:back}). The value of~$t$ is
adjusted so that $\Front$ and $\Back$ have disjoint convex hulls.
\begin{figure}[htb]
  \centerline{\begin{tikzpicture}
\begin{yxzcoords}

\fill[black!10]
(3.5cm,2cm) coordinate (o) circle (1cm)
;

\path
(o)
+(0,0,.4) node (z) {\ns$z$}
+(.5,0,0) node (y) {\ns$y$}
+(0,.4,0) node (x) {\ns$x$}
;

\draw[thin,->] (o) -- (x);
\draw[thin,->] (o) -- (y);
\draw[thin,->] (o) -- (z);

\begin{scope}[scale=1.5]

\input{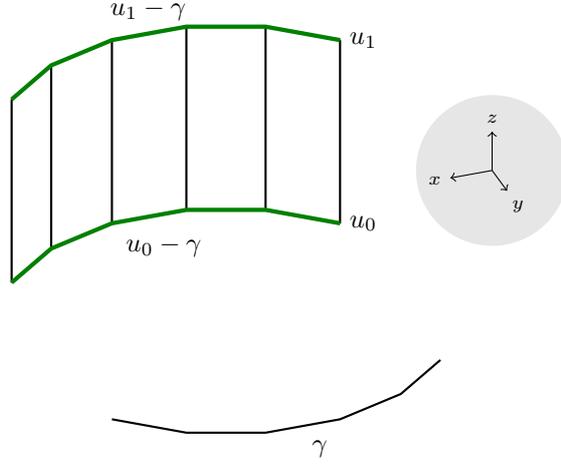}

\draw[thick]
(w0_0)
\foreach \x in {1,...,5}
 { -- (w\x_0) }
;

\draw[thick]
(s0)
\foreach \x in {1,...,5}
 { -- (s\x) }
;

\draw[thick]
(t0)
\foreach \x in {1,...,5}
 { -- (t\x) }
;

\foreach \x in {0,...,5}
\draw[thick]
 (s\x) -- (t\x);

\draw[ultra thick,green!50!black]
(t0)
\foreach \x in {1,...,5}
{ -- (t\x) }
;

\draw[ultra thick,green!50!black]
(s0)
\foreach \x in {1,...,5}
{ -- (s\x) }
;

\node at (1.2,0,0) {$\gamma$};
\node at (-1,0,1.05) {$\ut-\gamma$};
\node at (-0.8,0,-0.05) {$\uz-\gamma$};
\node[right] at (t0) {$\ut$};
\node[right] at (s0) {$\uz$};

\end{scope}
\end{yxzcoords}
\end{tikzpicture}}
  \caption{Design of $\Back$.\label{f:back}}
\end{figure}

By construction, $\Back$ consists of $m+2$ rectangles orthogonal to the
$(x,y)$-plane, namely the rectangles $u_{0}u_{1} - E_k$, for $k \in
\{0,\dots,m+1\}$.  The top and bottom edges of each rectangle are $\ut
- E_{k}$ and $\uz - E_{k}$ respectively.

\begin{figure}[H]
 \centerline{\begin{tikzpicture}
\begin{yxzcoords}

\fill[black!10]
(-6cm,-2.5cm) coordinate (o) circle (1cm)
;

\path
(o)
+(0,0,.4) node (z) {\ns$z$}
+(.5,0,0) node (y) {\ns$y$}
+(0,.4,0) node (x) {\ns$x$}
;

\draw[thin,->] (o) -- (x);
\draw[thin,->] (o) -- (y);
\draw[thin,->] (o) -- (z);

\begin{scope}[scale=1.5]

\input{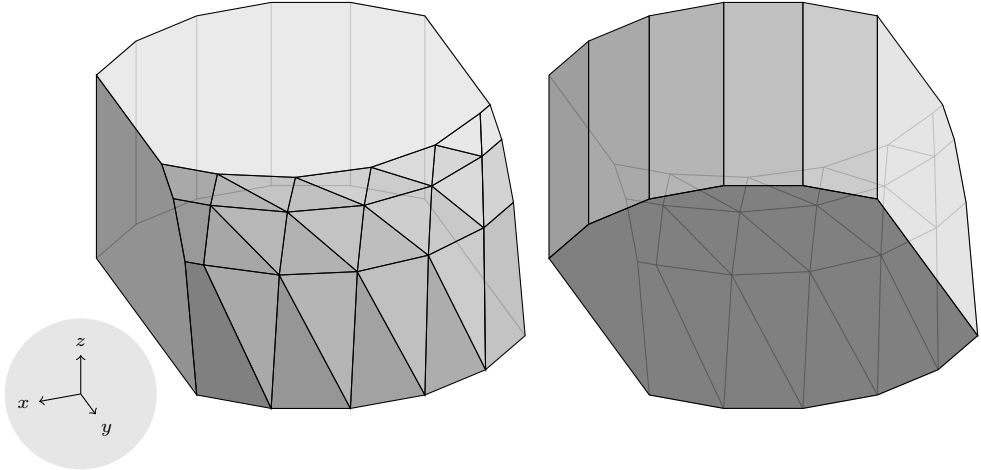}

\matrix [column sep=.3cm, row sep=.5cm]{

\draw[very thin, black!30]
(w5_0) -- (s0) -- (t0) -- (w5_3)
(s4) -- (s5)
(t4) -- (t5)
(s0) -- (s1)
(t0) -- (t1)
;

\draw[very thin, black!30]
(s1) -- (s2) -- (s3) -- (s4)
(s1) -- (t1)
(s2) -- (t2)
(s3) -- (t3)
(s4) -- (t4)
;

\path 
(v5_0) coordinate (b);
\foreach \x in {5,...,2}
{
\path 
(b) coordinate (b\x)
(v\x_0) coordinate (a\x) coordinate (b);
}

\path
(v5_1) coordinate (b);
\foreach \x in {5,...,2}
{
  \pgfmathsetmacro\c{.09*\x}
  \definecolor{currentcolor}{hsb}{0, 0, \c}
  \draw[fill=currentcolor,fill opacity=0.5]
  (a\x) -- (v\x_1) -- (b\x) -- cycle;
  \path 
  (b) coordinate (b\x)
  (v\x_1) coordinate (a\x) coordinate (b);
}

\path
(v1_1) coordinate (a1)
(v2_1) coordinate (b1);

\foreach \y in {2,...,3}
{
 \path (v5_\y) coordinate (b); 
 \foreach \x in {5,...,1}
 {
  \pgfmathsetmacro\c{.09*\x+.11*\y}
  \definecolor{currentcolor}{hsb}{0, 0, \c}
  \draw[fill=currentcolor,fill opacity=0.5]
  (a\x) -- (v\x_\y) -- (b\x) -- cycle;
  \path 
  (b) coordinate (b\x)
  (v\x_\y) coordinate (a\x) coordinate (b);
 }
}

\foreach \x in {2,...,5}
{
 \path
 (v\x_0) coordinate (a\x)
 ;
}

\foreach \y in {1,...,3}
{
 \path
 (v1_\y) coordinate (b)
 ;
 \foreach \x in {2,...,5}
 {
  \pgfmathsetmacro\c{.09*\x+.11*\y+.04}
  \definecolor{currentcolor}{hsb}{0, 0, \c}
  \draw[fill=currentcolor,fill opacity=.5]
  (b) -- (a\x) -- (v\x_\y) coordinate (a\x) coordinate (b) -- cycle
  ;
 }
}

\draw[fill=black,fill opacity=0.5]
(w0_0) -- (w0_1) -- (v1_1) -- (w1_0) -- cycle
;
\draw[fill=black!80,fill opacity=0.5]
(w0_1) -- (w0_2) -- (v1_2) -- (v1_1) -- cycle
;
\draw[fill=black!67,fill opacity=0.5]
(w0_2) -- (w0_3) -- (v1_3) -- (v1_2) -- cycle
;

\draw[fill=black!47,fill opacity=0.5]
(w5_0) -- (w5_1) -- (v5_1) -- (w4_0) -- cycle
;
\draw[fill=black!34,fill opacity=0.5]
(w5_1) -- (w5_2) -- (v5_2) -- (v5_1) -- cycle
;
\draw[fill=black!18,fill opacity=0.5]
(w5_2) -- (w5_3) -- (v5_3) -- (v5_2) -- cycle
;

\draw[fill=black!85,fill opacity=0.5]
(w0_0)
\foreach \y in {1,...,3}
{ -- (w0_\y)}
-- (t5) -- (s5) -- cycle
;

\draw[fill=black!16,fill opacity=0.5]
(v1_3)
\foreach \x in {2,...,5}
 { -- (v\x_3)}
-- (w5_3)
\foreach \x in {0,...,5}
 { -- (t\x)}
-- (w0_3) -- cycle
;

&

\path 
(v5_0) coordinate (b);
\foreach \x in {5,...,2}
{
\path 
(b) coordinate (b\x)
(v\x_0) coordinate (a\x) coordinate (b);
}

\path
(v5_1) coordinate (b);
\foreach \x in {5,...,2}
{
\draw[very thin, black!30]
  (a\x) -- (v\x_1) -- (b\x) -- cycle;
  \path 
  (b) coordinate (b\x)
  (v\x_1) coordinate (a\x) coordinate (b);
}

\path
(v1_1) coordinate (a1)
(v2_1) coordinate (b1);

\foreach \y in {2,...,3}
{
 \path (v5_\y) coordinate (b); 
 \foreach \x in {5,...,1}
 {
\draw[very thin, black!30]
  (a\x) -- (v\x_\y) -- (b\x) -- cycle;
  \path 
  (b) coordinate (b\x)
  (v\x_\y) coordinate (a\x) coordinate (b);
 }
}

\draw[very thin, black!30]
(v1_0) -- (v2_0) -- (v1_1)
(v1_3)
\foreach \x in {2,...,5}
 { -- (v\x_3)}
;

\draw[very thin, black!30]
(w0_0)
\foreach \y in {1,...,3}
{ -- (w0_\y)}
;

\draw[very thin, black!30]
(w5_0)
\foreach \y in {1,...,3}
{ -- (w5_\y)}
;

\foreach \y in {0,...,3}
{
\draw[very thin, black!30]
(w0_\y) -- (v1_\y)
(w5_\y) -- (v5_\y)
;
}

\draw[very thin, black!30]
(t5) -- (w0_3)
;

\path
(s5) coordinate (a)
(t5) coordinate (b)
;
\foreach \x in {4,...,0}
{
  \pgfmathsetmacro\c{.6-.09*\x}
  \definecolor{currentcolor}{hsb}{0, 0, \c}
  \draw[fill=currentcolor,fill opacity=0.5]
  (a) -- (s\x) -- (t\x) -- (b) -- cycle;
  \path 
  (s\x) coordinate (a)
  (t\x) coordinate (b)
  ;
}

\draw[fill=black, fill opacity=0.5]
(w0_0)
\foreach \x in {1,...,5}
 { -- (w\x_0)}
\foreach \x in {0,...,5}
 { -- (s\x)}
 -- cycle
;

\draw[fill=black!20,fill opacity=0.5]
(w5_0)
\foreach \y in {1,...,3}
{ -- (w5_\y)}
-- (t0) -- (s0) -- cycle
;

\\
};

\end{scope}
\end{yxzcoords}
\end{tikzpicture}}
  \caption{\emph{Left}: The polytope $K_m$. \emph{Right}: $K_m$ as viewed from behind. \label{f:Kex2}}  
\end{figure}

\begin{figure}[p]
   \centerline{\input{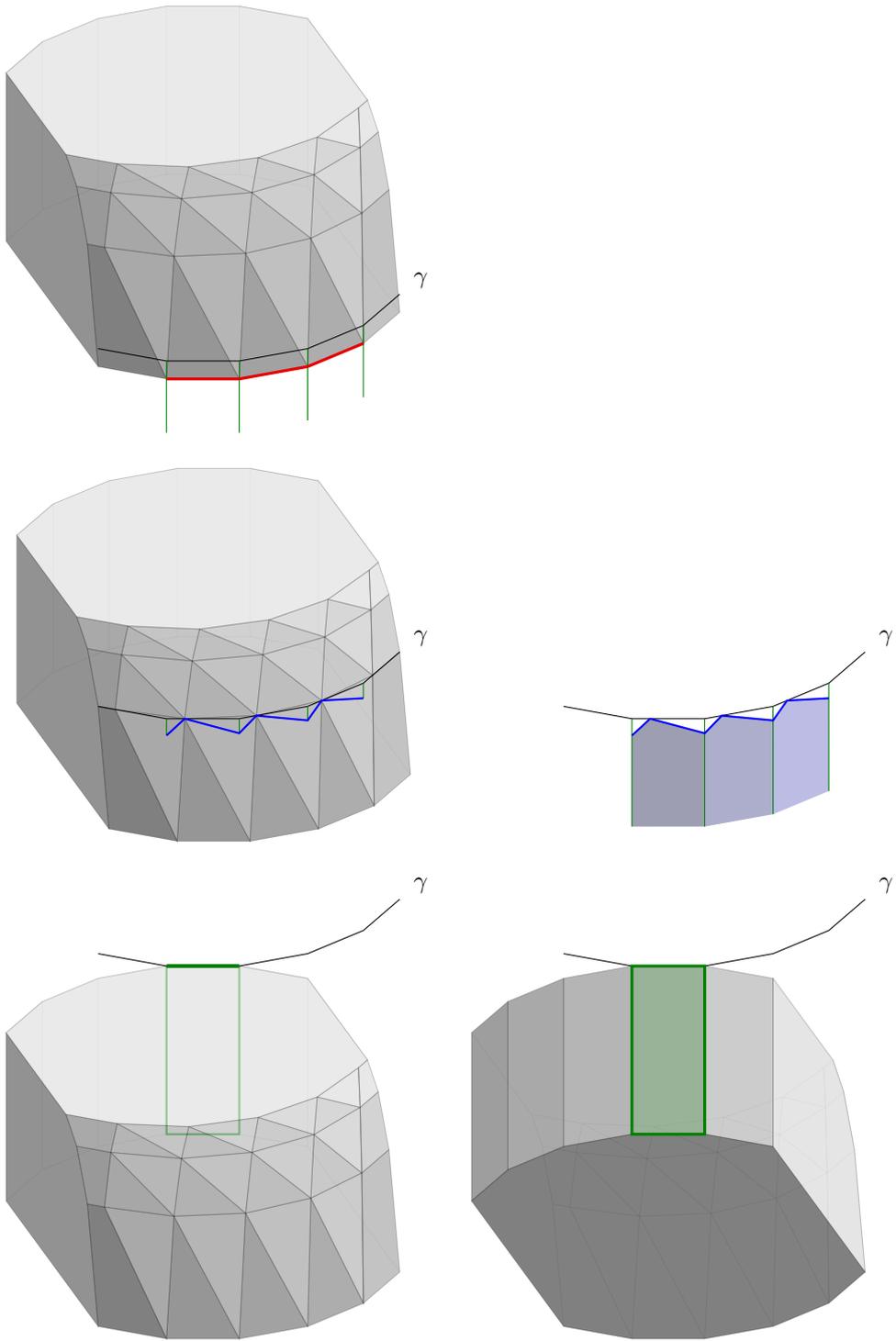}}
  \caption{
  \emph{Top}: A translate~$A_{i}$.
  \emph{Middle left}: A translate~$B_{j}$. \emph{Middle right}: The intersection of the vertical panels though $E_1,\dots,E_m$ with the translate $B_j$.
  \emph{Bottom left}: A translate~$C_{k}$ and the facet $R_k$. \emph{Bottom right}: $C_{k}$ and $R_k$ as viewed from behind.
  \label{f:trans}}  
\end{figure}

\paragraph{The polytope $K_m$ and its translates.}

We now let $K_m = \conv(\Front \cup \Back)$, see Figure~\ref{f:Kex2},
and define three families of its translates.  First, for $k \in [m]$,
we define a translate $C_{k}$ such that the edge $\ut - E_{k}$
of~$C_{k}$ coincides with the edge~$E_{k}$ of~$K_{m}$.  Formally, we
have
\[ 
C_k = K_{m} + \vec{c}_k, \quad \hbox{where} \quad \vec{c}_k =
\vec{w}_{0,k} - (\ut - \vec{w}_{0,k+1}).
\]
See Figure~\ref{f:trans}~(bottom).  The vertical facet formed by $\ut -
E_{k}$ and $\uz - E_{k}$ of~$C_{k}$ will be incident to a quadratic
number of holes, playing the same role as the facet~$F$ in the
previous section.  Let us denote this facet as~$R_{k}$.

Next, for $j \in [m]$, we define a translate~$B_{j}$
of~$K_{m}$ as follows:
\[ 
B_j = K_{m} + \vec{b}_j, \quad \hbox{where} \quad \vec{b}_j  = -
\vec{w}_{j,0}.
\]
In other words, the path $\hpaths_{j}$ of~$B_{j}$ lies in the
$(x,y)$-plane and the vertex~$v_{j,k}$ of~$B_{j}$ lies on~$E_{k}$.
See Figure~\ref{f:trans}~(middle).

Consider now the edge~$E_{k}$, for $k \in [m]$. For each $j \in [m]$,
the edge $E_{k}$ contains the vertex $v_{j,k}$ of~$B_{j}$.  By the
argument above, the intersection of $B_{j}$ with the facet~$R_{k}$ is
a vertical cone with apex~$(1-\nicefrac{j}{m+1})w_{0,k} +
(\nicefrac{j}{m+1})w_{0,k+1}$. These apices are regularly spaced
along~$E_{k}$, and we obtain a configuration similar to
Figure~\ref{f:KandSlices}~(bottom).  In other words, the union
$\bigcup_{j\in [m]}B_{j}$ bounds $m-1$ triangular regions below the
edge~$E_{k}$ on the facet~$R_{k}$ of~$C_{k}$.

We now pick a suffiently small number $\eps > 0$ and define our final
family of translates. For $i \in [m]$, let
\[  
A_i = K + \vec{a}_i, \quad \text{where } 
\vec{a}_i = \big(0,0,-\tfrac{i}{m} \eps \big).
\]
The translates $A_{i}$ are defined by translating $K$ vertically such that 
for every $i\in [m]$ and $k\in [m]$, the edge~$E_{k}$ of~$A_{i}$ appears
as a horizontal edge on~$R_{k}$, cutting each of the $m-1$ triangular
regions.
See Figure~\ref{f:trans}~(top).

\paragraph{The family~$\F$.}
We now finally set 
\[
\F = \{A_1,\dots,A_m,\, B_1,\dots,B_m,\, C_1,\dots,C_m\}.
\]
The union of the family~$\F$ has at least $m^2(m-1)$ holes: For $k \in
[m]$, we consider the facet~$R_{k}$ of~$C_{k}$.  On this facet, the
union of the $B_{j}$ and~$A_{i}$ forms a grid with~$m(m-1)$ holes.
As in Section~\ref{s:ex1}, we argue that each of these holes is
incident to a distinct component of the complement of the union of all
the translates.

We will not give a formal proof of this fact, since we will give an
even stronger construction in Section~\ref{sec:universal}, and we will
include a formal, algebraic argument for the correctness of that
construction.

\paragraph{Explicit coordinates for~$K_m$.}

The reader not satisfied with the qualitative description of~$\F$ may
enjoy verifying that the following coordinates satisfy the properties
we needed for our construction:
\begin{align*}
  \vec{w}_{j,0} & = (0,-3j,1-2^{-j}) & \text{ for } & 0 \leq j \leq m, \\ 
  \vec{w}_{0,k} & = (\cos\theta_k-1, \sin\theta_k, 0) & \text{ for } &
  \theta_k = \tfrac{\pi}{3}\left(\tfrac{k-1}{m} +1\right), 1 \leq k \leq m+1, \\
  \vec{w}_{0,m+2} & =  (-2,0,0), \\ 
  \ut &= (-2,-3m-3,1-2^{-m}), \\
  \uz &= (-2,-3m-3, 0).
\end{align*}

\FloatBarrier

\section{Constructing a universal convex body}
\label{sec:universal}

The family of translates constructed in Section~\ref{s:ex2} uses a
convex polytope~$K_{m}$ that depends on~$m$.  In this section we construct
a single convex body~$K$ that allows the formation of families of $n$
translates of~$K$, for arbitrarily large~$n$, with a cubic number of
holes. (Note that the \emph{position} of the translates in the family
will depend on~$n$.)

\paragraph{The convex body~$K$.}

The finite polygonal paths $\gamma$ and $\eta$ are replaced by
infinite polygonal paths.  We must also redefine the vertices
$v_{j,k}$, but the rest of the construction remains largely the same
as before.  Figure \ref{fig:many-large-grids} illustrates the
construction.
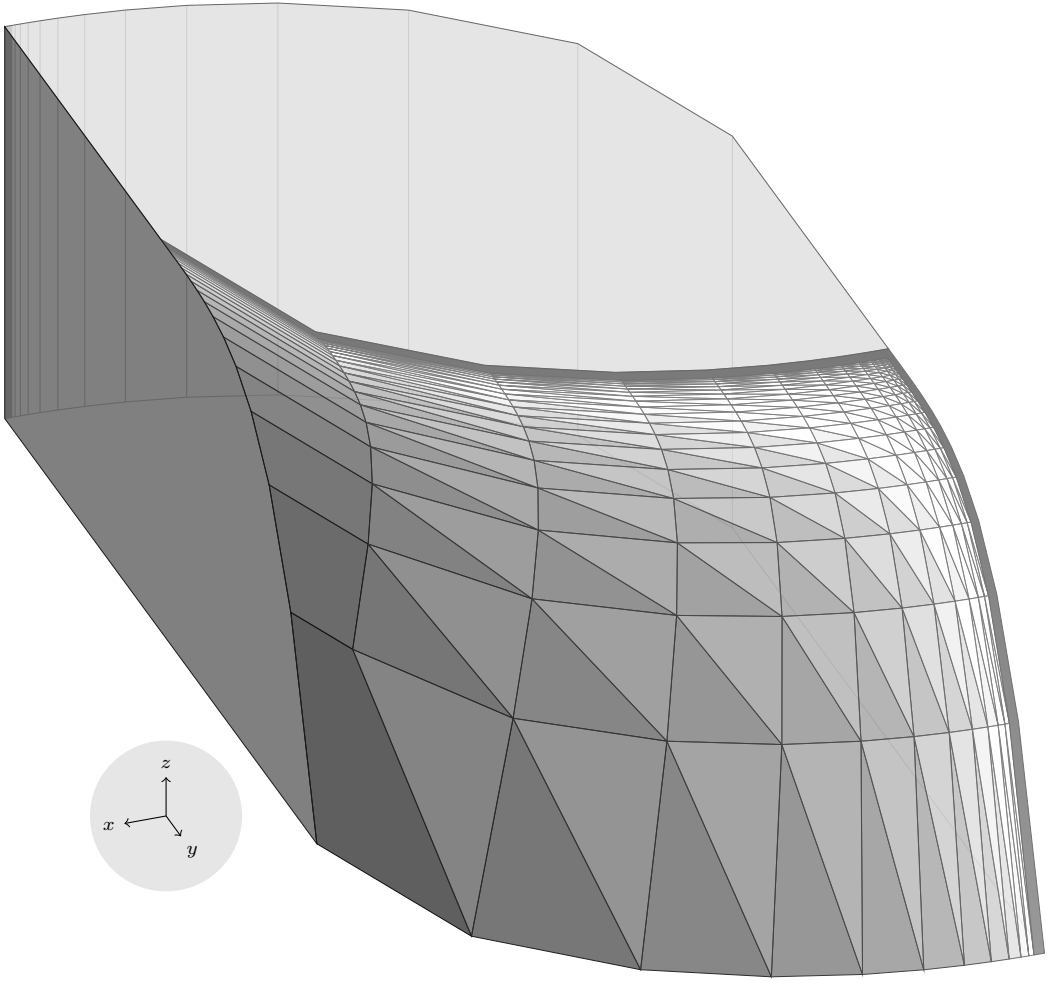
\begin{figure}[t]
  \def\yesyes{\yes}
  \ifx\showuniversal\yesyes\centerline{\begin{tikzpicture}

\begin{yxzcoords}

\fill[black!10]
(-9.5cm,-1cm) coordinate (o) circle (1cm)
;

\path
(o)
+(0,0,.4) node (z) {\ns$z$}
+(.5,0,0) node (y) {\ns$y$}
+(0,.4,0) node (x) {\ns$x$}
;

\draw[thin,->] (o) -- (x);
\draw[thin,->] (o) -- (y);
\draw[thin,->] (o) -- (z);

\end{yxzcoords}

\path 
(20:2.6667 and 1.3333) coordinate (x1)
(-70:2 and 1) coordinate (x2)
(90:1.7321) coordinate (x3)
;

\begin{scope}[x={(x2)}, y={(x1)}, z={(x3)},scale=3]

\path
\foreach \x in {0,...,12}
{ ({-(2/3)^\x*90}:1) coordinate (w\x) };

\path
\foreach \x in {0,...,12}
{
 \foreach \y in {0,...,18}
 {
  (w\x) ++({((5/6)^\y-1)},0,{(1-(1/2)^\y)}) coordinate (w\x_\y) 
 }
 (w\x) ++(-1,0,1) coordinate (w\x_y)
}
\foreach \y in {0,...,18}
{
 (w0_\y) ++(1,1,0) coordinate (wx_\y)
}
(0,0,1) coordinate (wx_y);

\foreach \y in {0,...,18}
{
 \path (w0_\y) coordinate (a); 
 \foreach \x in {1,...,12}
 {
  \path
  (w\x_\y) coordinate (b)
  ($(b)!{(3/5)^\y}!(a)$) coordinate (v\x_\y)
  (b) coordinate (a);
 }
}

\path
\foreach \x in {0,...,12}
{ (-1,-1,0) ++({-(2/3)^\x*90}:-1) coordinate (s\x)
  ++(0,0,1) coordinate (t\x)}
;

\path
(-2,-1,0) coordinate (sx)
(-2,-1,1) coordinate (tx)
;

\draw[ultra thin, black!20]
(wx_0) -- (s0)
\foreach \x in {1,...,12}
 { -- (s\x) }
-- (sx) 
;


\foreach \x in {0,...,12}
\draw[ultra thin, black!20]
 (s\x) -- (t\x)
;

\draw[ultra thin, black!20, fill=black!20]
(s12) -- (sx) -- (tx) -- (t12) -- cycle
;

\path
(w0_0) coordinate (a)
(v2_0) coordinate (b)
;
\foreach \y in {1,...,18}
{
\pgfmathsetmacro\c{.07*\y+.03}
\definecolor{currentcolor}{hsb}{0, 0, \c}
\draw[ultra thin, black!50!currentcolor,fill=currentcolor,fill opacity=.7]
(a) -- (w0_\y) -- (v1_\y) -- (b) -- cycle
;
\path
(w0_\y) coordinate (a)
(v1_\y) coordinate (b)
;
}

\path 
(v12_0) coordinate (b);
\foreach \x in {11,...,2}
{
\path 
(b) coordinate (b\x)
(v\x_0) coordinate (a\x) coordinate (b);
}

\path
(v12_1) coordinate (b);
\foreach \x in {11,...,2}
{
  \pgfmathsetmacro\c{.09*\x+.06}
  \definecolor{currentcolor}{hsb}{0, 0, \c}
  \draw[ultra thin, black!50!currentcolor,fill=currentcolor,fill opacity=.7]
  (a\x) -- (v\x_1) -- (b\x) -- cycle;
  \path 
  (b) coordinate (b\x)
  (v\x_1) coordinate (a\x) coordinate (b);
}

\path
(v1_1) coordinate (a1)
(v2_1) coordinate (b1);

\foreach \y in {2,...,18}
{
 \path (v12_\y) coordinate (b); 
 \foreach \x in {11,...,1}
 {
  \pgfmathsetmacro\c{.09*\x+.07*\y}
  \definecolor{currentcolor}{hsb}{0, 0, \c}
  \draw[ultra thin, black!50!currentcolor,fill=currentcolor,fill opacity=.7]
  (a\x) -- (v\x_\y) -- (b\x) -- cycle;
  \path 
  (b) coordinate (b\x)
  (v\x_\y) coordinate (a\x) coordinate (b);
 }
}

\foreach \x in {2,...,12}
{
 \path
 (v\x_0) coordinate (a\x)
 ;
}

\foreach \y in {1,...,18}
{
 \path
 (v1_\y) coordinate (b)
 ;
 \foreach \x in {2,...,12}
 {
  \pgfmathsetmacro\c{.09*\x+.07*\y+.06}
  \definecolor{currentcolor}{hsb}{0, 0, \c}
  \draw[ultra thin, black!50!currentcolor,fill=currentcolor,fill opacity=.7]
  (b) -- (a\x) -- (v\x_\y) coordinate (a\x) coordinate (b) -- cycle
  ;
 }
}

\shade[draw, ultra thin, black!55, top color= black!55, middle color= black!50, bottom color= black!40]
(v12_0)
\foreach \y in {0,...,18}
{ -- (wx_\y) }
-- (wx_y)
\foreach \x in {12,...,0}
{ -- (w\x_y) }
-- (w0_18) 
\foreach \x in {1,...,12}
{ -- (v\x_18) }
\foreach \y in {18,...,0}
{ -- (v12_\y) }
;

\draw[ultra thin, black!60, fill=black!20,fill opacity=0.5]
(w0_y) 
\foreach \x in {1,...,12}
{ -- (w\x_y) }
-- (wx_y) -- (t0)
\foreach \x in {1,...,12}
 { -- (t\x) }
-- (tx) -- cycle
;

\draw[ultra thin, black!90, fill=black,fill opacity=0.5]
(w0_0)
\foreach \y in {1,...,18}
{ -- (w0_\y) }
-- (w0_y) -- (tx) -- (sx) -- cycle
;

\end{scope}

\end{tikzpicture}}
  \else\centerline{OMITTED for speed}\fi
  \caption{A convex body $K$, not depending on $n$, with translates forming
    $\Theta(n^3)$ holes.}
  \label{fig:many-large-grids}
\end{figure}

Using $\zeta_s = \sum_{t=1}^\infty t^{-s}$ for $s \in \{2,3\}$, we
define vertices as follows:
\begin{align*}
  \vec{w}_{0,0} & = \vec{0}, \\
  \vec{w}_{j,0} & = \vec{w}_{j-1,0} + \left(0, -j^{-2}, j^{-3} \right) 
  & \qquad & \text{for $j \in \{1,2,\dots\}$},
  \\
  \vec{w}_{\infty,0} & = \lim_{j \to \infty} \vec{w}_{j,0}  
  = \left(0, -\zeta_2, \zeta_3 \right), \\
  \vec{w}_{0,k} & = \vec{w}_{0,k-1} + \left(k^{-2}, k^{-3}, 0  \right) 
  & & \text{for $k \in \{1,2,\dots\}$}, \\
  \vec{w}_{0,\infty} & = \lim_{k \to \infty} \vec{w}_{0,k}  
  = \left(\zeta_2, \zeta_3, 0 \right), \\
  \vec{w_{j,k}} & = w_{j,0} + w_{0,k} 
  & & \text{for $j,k \in \{1,2,\dots,\infty\}$}, \\
  \vec{v}_{j,k} & = \tfrac{1}{(j+1)^3} \vec{w}_{j,k} + 
  \tfrac{(j+1)^3-1}{(j+1)^3} \vec{w}_{j,k+1} 
  & & \text{for $j,k \in \{0,1,\dots\}$}, \\
  \ut & = (\zeta_2,-2,\zeta_3), \\
  \uz & = (\zeta_2,-2,0).
\end{align*}
We define the convex path~$\gamma$ as $w_{0,0}w_{0,1}w_{0,2}\dots
w_{0,\infty}$, and the convex path~$\eta$ as $w_{0,0} w_{1,0} w_{2,0}
\dots w_{\infty, 0}$.  Again we let $E_{k}$ denote the edge
$w_{0,k}w_{0,k+1}$, for $k \in \{0,1,\dots\}$.  The vertex $v_{j,k}$
lies on the edge $w_{j,k}w_{j,k+1} = E_{k} + w_{j,0}$.  For $j \in
\{1,2,\dots\}$, we set $\tpaths_{j} = w_{j,0} + \gamma$, and define the
convex path $\hpaths_{j}$ as $w_{j,0}v_{j,0}v_{j,1}v_{j,2}v_{j,3}\dots
w_{j,\infty}$; note that $\hpaths_0 = \gamma$.  The path $\hpaths_\infty$ is equal to $\tpaths_{\infty}
= \gamma + w_{\infty, 0}$; note that $\lim_{j \to \infty} v_{j,k} =
w_{\infty, k+1}$.

The front part of~$K$ is the convex hull of the paths~$\hpaths_{j}$:
\[
\Front = \conv(\hpaths_0 \cup \hpaths_1 \cup \hpaths_2 \cup \dots \cup
\hpaths_{\infty}).
\]
The back part $\Back$ of~$K$ is the Minkowski sum of~$\ut\uz$ and
$-\gamma$. By construction, it is the union of the rectangles $\ut\uz
- E_{k}$, for $k \in \{0,1,2,\dots\}$.  The top and bottom edges of
each rectangle are $\ut - E_{k}$ and $\uz - E_{k}$.  

Finally, we define $K = \conv(\Front \cup \Back)$, concluding the
description of our convex body~$K$.  This body has the following
property.
\begin{lemma}
  The polygonal path~$\hpaths_{j}$ lies entirely on the front boundary of~$K$.
\end{lemma}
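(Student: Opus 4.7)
\emph{Proof plan.}
My strategy has two parts. First, I would show that $\hpaths_j$ lies on the front boundary of $\Front$ itself; second, I would upgrade this to the front boundary of $K = \conv(\Front \cup \Back)$ via a concavity argument exploiting that $\Back$ sits far behind $\Front$ in the $y$-direction.

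For Part 1, each $\hpaths_j$ lies in the horizontal plane $z = \sum_{i=1}^{j} i^{-3}$, these heights strictly increasing in $j$, and is convex in the forward direction within its own plane. For each edge $v_{j,k}v_{j,k+1}$ of $\hpaths_j$ I would exhibit a supporting hyperplane to $\Front$ containing that edge and having positive $y$-component in its outward normal: a plane through the edge, tilted slightly downward from horizontal toward the back, such that every vertex of every other $\hpaths_i$ (for $i \neq j$) lies on its back side. The required inequalities come from the coordinate scaling: since $w_{j,0}$ moves backward by $j^{-2}$ and upward only by $j^{-3}$, the vertical separation of consecutive paths is much smaller than the horizontal spacing of the $v$-vertices within each $\hpaths_j$.

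For Part 2, let $p \in \hpaths_j$ and let $q \in K$ satisfy $x(q) = x(p)$ and $z(q) = z(p)$. Writing $q = \lambda q_F + (1-\lambda) q_B$ with $q_F \in \Front$, $q_B \in \Back$, $\lambda \in [0,1]$, we have $x(p) = \lambda x(q_F) + (1-\lambda) x(q_B)$ and $z(p) = \lambda z(q_F) + (1-\lambda) z(q_B)$. Define $\Phi(x, z) := \max\{y : (x, y, z) \in \Front\}$ on its effective domain $D \subseteq \R^2$; convexity of $\Front$ makes $\Phi$ concave on $D$. A direct check shows that $\Front$ and $\Back$ project along the $y$-axis onto the same rectangle $[0, \zeta_2] \times [0, \zeta_3] = D$, so $(x(q_B), z(q_B)) \in D$. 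Concavity of $\Phi$ gives
\[
\Phi(x(p), z(p)) \geq \lambda\, \Phi(x(q_F), z(q_F)) + (1-\lambda)\, \Phi(x(q_B), z(q_B)).
\]
Since $\Front \subseteq \{y \geq -\zeta_2\}$ and $\zeta_2 < 2$, we have $\Phi \geq -\zeta_2 > -2 \geq y(q_B)$, so
\[
y(q) = \lambda y(q_F) + (1-\lambda) y(q_B) \leq \lambda \Phi(x(q_F), z(q_F)) + (1-\lambda) \Phi(x(q_B), z(q_B)) \leq \Phi(x(p), z(p)) = y(p),
\]
where the last equality uses Part 1. Hence $p$ lies on the front boundary of $K$.

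The hard part is Part 1. Although the vertical spacing between $\hpaths_j$ and $\hpaths_{j+1}$ shrinks to zero as $j$ grows, the supporting hyperplane for each edge must still keep every other $v_{i,l}$, as well as the limit points $w_{\infty, l}$, on its back side; balancing these constraints against the vanishing vertical gaps is where the delicate work lies. Part 2, by contrast, is essentially mechanical once Part 1 is in hand.
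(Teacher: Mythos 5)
Your two-part decomposition is sound, and Part~2 is a genuine contribution: the paper jumps directly from ``$\hpaths_j$ is in front of $\KK$'' to ``$\hpaths_j$ is on the front boundary of $K$'' without discussing $\Back$ at all, and your concavity argument with $\Phi(x,z) = \max\{y : (x,y,z)\in\Front\}$ and the observation that $y(\Back) \leq -2 < -\zeta_2 \leq \Phi$ supplies exactly the missing justification. (One small inaccuracy: writing $q = \lambda q_F + (1-\lambda) q_B$ you can only assert $q_B \in \conv(\Back)$, not $q_B \in \Back$, since $\Back$ is a non-convex ruled surface; but both facts you use, namely that the $(x,z)$-projection lands in $D$ and that $y \leq -2$, are preserved under taking convex hulls, so the argument survives.)

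The problem is Part~1, which carries the entire weight of the lemma, and which you have not proved --- you have correctly described what needs to be shown (a supporting hyperplane through each edge $v_{j,k}v_{j,k+1}$ of $\hpaths_j$ with positive $y$-normal keeping every $v_{i,\ell}$, $i\neq j$, and every $w_{\infty,\ell}$ on its back side), and then explicitly flagged the verification as ``where the delicate work lies'' without doing it. Your heuristic, that the vertical gap $j^{-3}$ between consecutive paths is dwarfed by the horizontal spacing of $v$-vertices, does not obviously hold uniformly: the $x$-spacing along $\hpaths_j$ shrinks like $(k+1)^{-2}$ and eventually becomes far smaller than $j^{-3}$, so the inequalities are genuinely delicate and the exponents $2$ and $3$ in the construction matter. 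The paper's proof does precisely this work, in a dual but equivalent form: it sets $\KK = \conv\bigl(\bigcup_{j'\neq j}(\tpaths_{j'}+(0,\R^-,0))\bigr)$, computes that the front boundary of $\KK\cap\Pi$ (where $\Pi$ is the plane of $\hpaths_j$) is the translate $\gamma + p$ with $p = w_{j,0} - \bigl(0, \tfrac{1}{j^3+(j+1)^3}, 0\bigr)$, and then verifies vertex-by-vertex via the parametrization $E(s)$ that $\hpaths_j$ lies strictly in front, culminating in the bound $\tfrac{-1}{j^3+J} < \tfrac{1-J}{2J(J+3)}$ for $J=(j+1)^3$. Until you carry out this computation (or your supporting-hyperplane analogue of it), Part~1 --- and therefore the lemma --- is not established.
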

\begin{proof}
  We will show that any point in $\hpaths_{j}$ lies outside the convex
  hull of the regions $\hpaths_{j'} +(0,\R^-,0)$, for $j' \neq j$.  
  Fix $j \in \mb{N}$, and let $\KK$ be the convex hull of the regions $\tpaths_{j'} +(0,\R^-,0)$, for $j' \neq j$.
  Since the path $\hpaths_{j'}$ lies in the convex region $\tpaths_{j'} +(0,\R^-,0)$, it suffices to show that $\hpaths_j$ lies outside $\KK$.
  Since the $\tpaths_{j}$ are just translates of~$\gamma$, the
  body~$\KK$ is easy to describe.  In particular, between the
  $(x,y)$-parallel planes containing~$\tpaths_{j-1}$ and~$\tpaths_{j+1}$, its
  front boundary is formed by rectangles that are the convex hull
  of~$E_{k} + w_{j-1,0}$ and~$E_{k} + w_{j+1,0}$.  Let $\Pi$ be the
  $(x,y)$-parallel plane containing~$\hpaths_{j}$ and~$\tpaths_{j}$.  The
  front boundary of~$\KK \cap \Pi$ is again a translate of~$\gamma$.
  We first compute this translate of~$\gamma$, by finding the
  intersection point~$p$ of the segment $w_{j-1,0}w_{j+1,0}$
  with~$\Pi$.

  Let $p' = p - w_{j-1, 0}$.  This makes $p'$ the point at
  height~$1/j^{3}$ on the line through the origin and the point
  $w_{j+1,0} - w_{j-1,0}$.  Since
  \[
  w_{j+1,0} - w_{j-1,0} = \bigg(0,
  -\frac{1}{j^{2}}-\frac{1}{(j+1)^{2}}, 
  \frac{1}{j^{3}}+\frac{1}{(j+1)^{3}} \bigg) =
  \bigg(0,  \frac{-j^2 -(j+1)^2}{j^2(j+1)^2}, 
  \frac{j^3 +(j+1)^3}{j^3(j+1)^3} \bigg),
  \] 
  this gives
  \[
  p' = \bigg(0,\, \frac{-(j+1)(j^2+(j+1)^2)}{j^2(j^3 + (j+1)^3)},\, 
  \frac{1}{j^3} \bigg).
  \]
  Since $p = p' + w_{j-1,0}$ and 
  $w_{j,0} - w_{j-1,0} = (0, -j^{-2}, j^{-3})$, we have 
  \[
  p = \vec{w}_{j,0} - \Big(0,\, \frac{1}{j^3 +(j+1)^3},\, 0 \Big). 
  \]
  
  We have just computed the front boundary $\gamma + p$ of
  $\KK \cap \Pi$, and we want to show that the path~$\hpaths_j$ 
  lies farther in front of this boundary. Since $\gamma + p$ and $\hpaths_j$
  are convex paths, it suffices to show that any vertex of $\gamma +
  p$ lies behind~$\hpaths_{j}$.  These vertices are the
  points~$w_{0,k} + p$, for $k \in \{0, 1, 2, \dots\}$.
  That is, we will show~$w_{0,k} +p \in \hpaths_j +(0,\R^-,0)$.  

  We fix some $k \in \{1, 2, \dots\}$ and consider~$w_{0,k} + p$.  The
  line parallel to the $y$-axis through $w_{0,k} + p$ intersects the
  edge $v_{j,k-1}v_{j,k}$ of~$\hpaths_{j}$ in a point~$q$.  We need to
  show that $q^{y} \geq (w_{0,k} + p)^{y}$ (here and in the following,
  we use superscripts~$x,y,z$ to denote the coordinates of a point).

  It will be convenient to translate our coordinate system such that
  $w_{j,k}$ is the origin.  This means that $\Pi$ is the plane~$z =
  0$.  Letting~$J = (j+1)^{3}$, we have:
  \begin{align*}
    w_{0,k} + p - w_{j,k} & = w_{0, k} + p - (w_{j,0} + w_{0,k}) 
    = \Big(0, \frac{-1}{j^3
      + J}, 0 \Big), \\
    v_{j,k} - w_{j,k} & = \tfrac{1-J}{J} w_{j,k} + 
    \tfrac{J-1}{J} w_{j,k-1} 
    = \tfrac{J-1}{J}\big[w_{j,k+1} - w_{j,k}\big], \\
    v_{j,k-1} - w_{j,k} & = \tfrac{1}{J} w_{j,k-1} 
    - \tfrac{1}{J} w_{j,k} 
    = \tfrac{1}{J} \big[w_{j,k-1} - w_{j, k}\big].
  \end{align*}

  Now parameterize the segment~$v_{j,k-1}v_{j,k}$ as~$E(s)$,
  for $0 \leq s \leq 1$, using our new coordinate system:
  \begin{align*}
    E(s) &= (1-s)\frac{J-1}{J}\big[w_{j,k+1} - w_{j,k}\big] 
    + \frac{s}{J} \big[w_{j,k-1} - w_{j, k}\big].
  \end{align*}
  The $x$- and $y$-coordinates of the point~$E(s)$ are
  \begin{align*}
    E(s)^{x} & = (1-s)\frac{J-1}{J}\frac{1}{(k+1)^{2}} 
    - \frac{s}{J}\frac{1}{k^{2}}, \\
    E(s)^{y} & = (1-s)\frac{J-1}{J} \frac{1}{(k+1)^{3}} 
    - \frac{s}{J} \frac{1}{k^{3}}.
  \end{align*}

  We have $q - w_{j,k} = E(t)$, for the $t \in [0,1]$ where $E(t)^{x}
  = 0$.  This condition is equivalent to $(1-t)(J-1)k^{2} = t(k+1)^{2}$,
  and therefore
  \begin{align*}
    t & = \frac{(J-1)k^2}{ (k+1)^2 +(J-1)k^2}, 
    \\
    E(t)^{y} & = \frac{1-J}{J}\frac{1}{k(k+1)((k+1)^2 +(J-1)k^2)}
    \geq \frac{1-J}{J}\frac{1}{2(J + 3)},
  \end{align*}
  where we used $k(k+1) \geq 2$, $k^{2} \geq 1$, $(k+1)^{2} -
  k^{2} \geq 3$, and the fact that $1-J \leq 0$.
  We further have
  \begin{align*}
    (w_{0,k} + p - w_{j,k})^{y} & = \frac{-1}{j^{3} + J}
    < \frac{-1}{2J} < \frac{1-J}{2J(J + 3)} 
    \leq E(t)^{y} = (q - w_{j,k})^{y},
  \end{align*}
  and therefore $(w_{0,k}+p)^{y} < q^{y}$.
  Thus, $w_{0,k}+p \in \hpaths_j + (0,\R^-,0)$, so $\hpaths_j$ is in
  front of $\KK$ and as such is on the front boundary of $K$. 
\end{proof}

\paragraph{The translates.}

We now pick a number $m\in \N$ and construct a family~$\F$ of
$3m$~translates of~$K$ such that their union will have a cubic number
of holes.  This construction is identical to the construction in
Section~\ref{s:ex2}:

First, for $k \in [m]$, we define a translate $C_{k}$ such that the
edge $\ut - E_{k}$ of~$C_{k}$ coincides with the edge~$E_{k}$
of~$K$, that is
\[ 
C_k = K + \vec{c}_k, \quad \hbox{where} \quad \vec{c}_k =
\vec{w}_{0,k} - (\ut - \vec{w}_{0,k+1}).
\]
Again we denote the vertical facet formed by $\ut - E_{k}$ and $\uz -
E_{k}$ of~$C_{k}$ as~$R_{k}$.

Next, for $j \in [m]$, we define a translate~$B_{j}$
of~$K$ as follows:
\[ 
B_j = K + \vec{b}_j, \quad \hbox{where} \quad \vec{b}_j  = -
\vec{w}_{j,0}.
\]
In other words, the path $\hpaths_{j}$ of~$B_{j}$ lies in the
$(x,y)$-plane, the vertex~$v_{j,k}$ of~$B_{j}$ lies on~$E_{k}$.

For the third group of translates we need to determine a sufficiently
small~$\eps > 0$.  First, observe that the points $w_{j,k}$, for $j,k
\in [m]$, do not lie on~$K$.  Let $\eps_{1}> 0$ be smaller than the
distance of $w_{j,k}$ to~$K$, for all $j,k \in [m]$.  Second, consider
the $m^{2}$ points
$ v_{j,k} +b_j$ for $j, k \in [m]$ 
on the path~$\gamma$. Let $\eps_{2}$ be the
shortest distance between any two of these points.

Consider now the segment $w_{j,k}w_{j,k+1}$, for some $j, k \in [m]$.
It touches $K$ in the point~$v_{j,k}$, the rest of the segment lies
entirely outside~$K$.  This implies that there is an $\eps_{j,k} > 0$
such that any line parallel to $w_{j,k}w_{j,k+1}$ at distance less
than $\eps_{j,k}$ intersects~$K$ only within a neighborhood
of~$v_{j,k}$ of radius~$\eps_2/3$.

We choose $\eps < \eps_1$ and $\eps < \eps_{j,k}$, for all $j,k \in
[m]$.  With this choice of~$\eps > 0$, we can finally define, for $i
\in [m]$:
\[  
A_i = K + \vec{a}_i, \quad \text{where } 
\vec{a}_i = \big(0,0,-\tfrac{i}{m} \eps \big).
\]

Our family~$\F$ is
\[
\F = \{A_1,\dots,A_m,\, B_1,\dots,B_m,\, C_1,\dots,C_m\}.
\]

\paragraph{The nerve.}
To every family $\mathcal{X}=\{X_1,X_2, \ldots, X_n\}$ of $n$ sets is
associated a collection of subfamilies $\nerv(\mathcal{X})$, called
the \emph{nerve} of $\mathcal{X}$, defined as follows:
\[
\nerv(\mathcal{X}) = \left\{\mathcal{Y} \subseteq \mathcal{X}: \bigcap_{X \in \mathcal{Y}}X \neq
\emptyset \right\}.
\] 
In a sense, the nerve is a natural generalization of the 
\emph{intersection graph}. 
In Section~\ref{sec:nerve}, we will count the number of holes in the union of $\F$ by computing the rank of certain matrices defined in terms of its nerve. 
We now give an explicit description of the nerve of the family~$\F$. 
Consider the following subfamilies of $\F$:
\begin{align*}
  \Delta_1 & =  \{A_1,\dots,A_m,B_1,\dots,B_m\}, \\
  \Delta_2 & =  \{B_1,\dots,B_m,C_1,\dots,C_m\}, \\
  \Delta_{i,k} & =  \{A_{i}, C_{k}, C_{k+1}\} &  \text{for } & (i,k) \in
        [m]\times [m-1],\\ 
        \Delta_{i,j,k} & =  \{A_{i}, B_{j}, C_{k}\} &  \text{for } &
        (i,j,k) \in [m]^3. 
\end{align*} 

\begin{lemma}\label{l:nerve}
  The set of inclusion-maximal subfamilies in $\nerv(\F)$ is
  \[
  \mathcal{M} = \{\Delta_1, \Delta_2\} \cup \{\Delta_{i,k} : (i,k) \in
  [m]\times[m-1]\} \cup \{\Delta_{i,j,k}: (i,j,k) \in [m]^3\}.
  \]
\end{lemma}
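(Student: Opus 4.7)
The plan is to verify two things: (i) each subfamily listed in $\mathcal{M}$ has nonempty common intersection, and (ii) any subfamily of $\F$ not contained in some element of $\mathcal{M}$ has empty common intersection. These together imply that $\mathcal{M}$ is exactly the set of inclusion-maximal elements of $\nerv(\F)$.

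For (i), I would exhibit an explicit common point for each subfamily. For $\Delta_{i,j,k}$, the candidate is the apex of $B_j$'s downward vertical cone along $E_k$, namely $v_{j,k}+\vec{b}_j$, shifted down by $(0,0,i\eps/m)$: this point lies on $A_i$'s shifted edge $E_k-(0,0,i\eps/m)$, on $B_j$'s cone, and on the facet $R_k$ of $C_k$, whose top edge is $E_k$ and which extends downward. For $\Delta_{i,k}$ one uses the shared endpoint $w_{0,k+1}$ of $E_k$ and $E_{k+1}$, again shifted down by $(0,0,i\eps/m)$, which belongs to $R_k \cap R_{k+1}$. For $\Delta_1$, the point $(\eps,0,0)$ works: using the tangent cone of $K$ at each $w_{j,0}\in\eta$, one checks that $(1,0,0)$ is a positive combination of the incident edge directions, so $(\eps,0,0)+w_{j,0}\in K$ for every $j$; a similar computation at $w_{0,0}$ shows $(\eps,0,i\eps/m)\in K$ for every $i\in[m]$, placing the point in every $A_i$ as well. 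For $\Delta_2$ one must locate a common point of all $B_j$'s and all $C_k$'s; this requires a more delicate coordinate analysis of where the bodies $B_j$ (translated back along $\eta$) overlap with the bodies $C_k$ (translated forward along $\gamma$ and downward along $-\ut$).

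For (ii) I would enumerate the subfamilies of $\F$ not contained in any $\Delta\in\mathcal{M}$; a case analysis shows they all strictly contain one of four patterns: $\{A_i,A_{i'},C_k\}$ with $i\neq i'$; $\{A_i,B_j,B_{j'},C_k\}$ with $j\neq j'$; $\{A_i,C_k,C_{k'}\}$ with $|k-k'|\geq 2$; and $\{A_i,B_j,C_k,C_{k+1}\}$. The key observation is that $C_k$ and $K$ lie on opposite sides of the vertical plane through $E_k$ (since $R_k$ is a facet of $C_k$), so $A_i\cap C_k$ reduces to the shifted edge $E_k-(0,0,i\eps/m)$. From this: two distinct $A$'s yield parallel disjoint shifted edges, killing the first pattern; the choice $\eps<\eps_{j,k}$ forces $B_j\cap\big(E_k-(0,0,i\eps/m)\big)$ to lie in a ball of radius $\eps_2/3$ around $v_{j,k}+\vec{b}_j$, and these balls are pairwise disjoint by the definition of $\eps_2$, killing the second pattern; strict convexity of $\gamma$ ensures the supporting lines of $E_k$ and $E_{k'}$ meet outside both segments when $|k-k'|\geq 2$, killing the third pattern; and the unique point of $A_i\cap C_k\cap C_{k+1}$, namely $w_{0,k+1}-(0,0,i\eps/m)$, fails to lie in $B_j$ because $w_{j,k+1}$ is at distance at least $\eps_1>\eps$ from $K$, killing the fourth pattern.

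The main obstacle is the $\Delta_2$ case in step (i): unlike the other three cases the common point is not pinned to a canonical vertex or edge, and exhibiting it requires a concrete coordinate argument. The remainder of the proof is essentially bookkeeping that exploits the fact that $\eps$ was chosen smaller than all of $\eps_1$ and $\eps_{j,k}$, so that every relevant intersection is tightly localized near the explicit vertices and edges used above.
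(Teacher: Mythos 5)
Your proposal follows the same two-step plan as the paper: exhibit witness points showing each $\Delta\in\mathcal{M}$ is in the nerve, then show by case analysis that nothing else can be a maximal simplex. Your step (ii) matches the paper's case analysis closely (and correctly): $A_i\cap C_k$ is the shifted edge $E_k+\vec{a}_i$, so two $A$'s cannot coexist with a $C$; the $\eps<\eps_{j,k}$ choice localizes $B_j\cap(E_k+\vec{a}_i)$ near $v_{j,k}+\vec{b}_j$ and the $\eps_2$-separation kills two $B$'s; nonadjacent $E_k,E_{k'}$ are disjoint; and $w_{0,k+1}+\vec{a}_i\notin B_j$ because $w_{j,k+1}$ is more than $\eps_1>\eps$ from $K$. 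This is the essential content of the paper's argument.

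The gap is in step (i), exactly where you flag it. For $\Delta_2$ you do not produce a common point of $B_1,\dots,B_m,C_1,\dots,C_m$; the paper's appendix does the real work here, verifying that $p=(\zeta_2,2,-1)$ lies in every $B_j$ and every $C_k$ by checking that $p-\vec{b}_j$ and $p-\vec{c}_k$ land in explicitly described rectangles inside $K$. Without some such computation the claim $\Delta_2\in\nerv(\F)$ is unproven, so this is a genuine hole rather than routine bookkeeping. Your $\Delta_1$ witness $(\eps,0,0)$ also has a subtler issue: the tangent-cone argument only gives that $w_{j,0}+t(1,0,0)\in K$ for $t$ small \emph{depending on $j$}, and the $\eps$ already fixed in the construction (bounded by $\eps_1$ and the $\eps_{j,k}$'s) was not chosen with this in mind, so you would need to impose a further upper bound on $\eps$. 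The paper sidesteps this by choosing $(0,-1,0)$, a point whose membership in every $A_i$ and $B_j$ is verified directly from coordinates and does not depend on $\eps$ at all. Your witnesses for $\Delta_{i,k}$ and $\Delta_{i,j,k}$ are the same as the paper's.
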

\begin{proof}
  To check that the subfamilies in $\mathcal{M}$ are in $\nerv(\F)$, we find a
  point in the intersection of each of them; see Appendix~\ref{apx:universal} for details.  Now,
  let $\sigma$ be a maximal subfamily in $\nerv(\F)$. If $\sigma$ does
  not contain any $C_k$ then $\sigma \subseteq \Delta_1$ and by
  maximality $\sigma = \Delta_1$. Similarly, if $\sigma$ does not
  contain any $A_i$ then $\sigma \subseteq \Delta_2$ and by maximality
  $\sigma = \Delta_2$.

  We can therefore assume that $A_i, C_k \in \sigma$.  By definition
  of $K$ and $\F$ we have $A_i \cap C_k = E_k +\vec{a}_i$.  
  Since $A_i \cap C_k$ and $A_{i'} \cap C_{k}$ are parallel
  segments for $i' \neq i$, $\sigma$ cannot contain~$A_{i'}$.

  Assume now that $\sigma$ contains no $B_j$. The segments $E_k$ and $E_{k'}$ 
  intersect if and only if~$k$ and~$k'$ differ by one. It follows that
  $\sigma$ is either $\Delta_{i,k}$ or $\Delta_{i,k-1}$.

  In the final case, $\sigma$ contains some~$B_j$, for $j \in [m]$.
  The segment $E_{k} + a_{i} - b_{j}$ is parallel to
  $w_{j,k}w_{j,k+1}$ at distance at most $\eps < \eps_{j,k}$, and so
  it intersects $K$ only in a neighborhood of $v_{j,k}$ of radius at
  most~$\eps_2/3$.  It follows that $E_{k}+ a_{i}$ intersects~$B_{j} =
  K + b_{j}$ only in a neighborhood of the same radius around the
  point
  $ v_{j,k} +b_j$. 
  But, the shortest distance between these points is~$\eps_2$, and so
  these neighborhoods are disjoint.  It follows that $\sigma$
  contains no other $B_{j'}$, for $j'\neq j$.

  Since the point~$w_{0,k} + a_{i}$ lies at distance at most~$\eps$
  from $w_{0,k}$, but the point $w_{j,k} = w_{0,k} - b_{j}$ has
  distance larger than~$\eps_1 > \eps$ from~$K$, we have
  $A_i \cap C_{k-1} \cap C_k = w_{0,k} + a_{i} \not\in B_{j}$, and so $C_{k-1} \not\in \sigma$.  For
  the same reason~$C_{k+1} \not\in \sigma$.
  It follows that $\sigma = \{A_{i}, C_{k}, B_{j}\} = \Delta_{i,j,k}$.
\end{proof}

\FloatBarrier

\section{Counting holes in the union via the nerve}
\label{sec:nerve}

In this section we develop a technique for counting the number of
holes in a union of convex objects.  We will demonstrate this technique by giving a new
proof of Kovalev's upper bound~\cite{k-svm-88}, and then we will use it to prove Theorem~\ref{thrm:cubic voids}.  
We start by recalling some standard topological machinery.

\paragraph{Simplicial complex.}

A \emph{simplicial complex} $\SC$ with vertex set $V$ is a set of
subsets of $V$ closed under taking subsets: if $\sigma \in \SC$ and
$\tau \subseteq \sigma$ then $\tau \in \SC$. An element $\sigma \in
\SC$ is called a \emph{simplex}; the \emph{dimension} of a simplex is
its cardinality minus $1$, so singletons are simplices of dimension
$0$, pairs are simplices of dimension $1$, etc... A simplex of
dimension $i$ is called an \emph{$i$-simplex} for short. The
\emph{vertices} of a simplex $\sigma \in \SC$ are the singletons
contained in $\sigma$.  Note that the nerve of a family of convex sets is a simplicial complex.

\paragraph{Homology and Betti numbers.}

Let $\SC$ be a simplicial complex on a \emph{totally ordered} vertex
set $V$.  The $i$th real chain space of $\SC$, denoted $\chain_i(\SC)$, is
the real vector space spanned\footnote{%
In other words, the $i$-dimensional simplices of $\SC$ form a basis of the vector space  $\chain_i(\SC)$, which consists of formal sums of $i$-simplices, each simplex being assigned a real coefficient.} 
by the $i$-simplices of $\SC$.  
For $i \in \N$, 
the \emph{$i$th boundary map} $\partial_i: \chain_i(\SC) \to \chain_{i-1}(\SC)$ is the linear map defined on a basis of $\chain_i(\SC)$ as follows. 
For any $i$-simplex 
$\sigma = \{v_0,v_1, \ldots, v_i\} \in \SC$ with $v_0<v_1< \ldots < v_i$, 
\[ \partial_i(\sigma) = \sum_{j=0}^{i} (-1)^j (\sigma \setminus \{v_j\}).\]
That is, $\partial_i$ maps every $i$-dimensional simplex to an element
of $\chain_{i-1}(\SC)$, namely an alternating sum of its facets. 
Observe that 
$\partial_i \circ \partial_{i+1} = 0$, so that $\im \partial_{i+1} \subseteq \ker \partial_i$. The $i$th \emph{simplicial
  homology group} $H_i(\SC,\R)$ of $\SC$ is defined as the quotient
$\ker \partial_i/ \im \partial_{i+1}$ and the $i$th \emph{Betti
  number} $\beta_i(\SC)$ of $\SC$ is the dimension of $H_i(\SC,\R)$,
hence:
\begin{equation}\label{e:betti}
  \beta_i(S) = \dim \ker \partial_i - \rank  \partial_{i+1}.
\end{equation}

If $X$ is a subset of $\R^d$, one can define, in a similar but more
technical way, the \emph{singular homology groups} of $X$ and its
\emph{Betti numbers}.  We do not recall those definitions (the
interested reader is referred
to~\cite{Hatcher:AlgebraicTopology-2002,Munkres:AlgebraicTopology-1984})
but emphasize two facts that will be useful:
\begin{itemize}
\item[(i)] $\beta_0(X)$ is the number of connected components of $X$, assuming $X$ admits a cell decomposition.
\item[(ii)] If $\mathcal{X}=\{X_1,X_2, \ldots, X_n\}$ is a family of convex
  objects in $\R^d$ and $U = \bigcup_{i=1}^n X_i$ then $H_i(U,\R) \simeq H_i(\nerv(\mathcal{X}))$; 
  as a consequence, $U$ and $\nerv(\mathcal{X})$ have the same Betti numbers. 
  This is the classical \emph{nerve theorem} of Borsuk~\cite{b-iscsc-48}.
\end{itemize}

\paragraph{Counting holes.}

We can now relate the number of holes in the union of a family of
convex objects to one particular Betti number of its nerve.

\begin{lemma}\label{l:hole-betti}
  If $\mathcal{X} = \{X_1,X_2, \ldots, X_n\}$ is a family of compact convex objects in
  $\R^d$ then the number of holes of $U = \bigcup_{i=1}^n X_i$ is
  $\beta_{d-1}(\nerv(\mathcal{X}))+1$.
\end{lemma}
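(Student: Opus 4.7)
The plan is to combine two classical topological tools already present in the excerpt: the nerve theorem (fact~(ii)) gives $\beta_{d-1}(U)=\beta_{d-1}(\nerv(\mathcal{X}))$, so the statement reduces to the purely topological identity
\[
\beta_0(\R^d\setminus U)\;=\;\beta_{d-1}(U)+1.
\]
Once this is established, composing with the nerve theorem finishes the proof, since by the paper's definition the number of holes of~$U$ is $\beta_0(\R^d\setminus U)$.

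To prove the displayed identity I would invoke Alexander duality. Pass to the one-point compactification $S^d=\R^d\cup\{\infty\}$. A finite union of compact convex sets is a compact ANR (in fact it has the homotopy type of its nerve, a finite simplicial complex), so the hypotheses for Alexander duality in $S^d$ are met, yielding
\[
\tilde H_0(S^d\setminus U;\R)\;\cong\;\tilde H^{d-1}(U;\R).
\]
Taking real dimensions gives $\beta_0(S^d\setminus U)-1=\tilde\beta^{d-1}(U)$. For $d\geq 2$ the counts $\beta_0(S^d\setminus U)$ and $\beta_0(\R^d\setminus U)$ coincide (the unique unbounded component of $\R^d\setminus U$ stays connected after adjoining $\infty$), and $\tilde\beta^{d-1}(U)=\beta_{d-1}(U)$ since $d-1\geq 1$, which gives the claim. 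The degenerate case $d=1$ is handled by identical bookkeeping in which the two ``$-1$'' shifts cancel.

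The main obstacle is the tameness hypothesis required for Alexander duality, but since $U$ is a finite union of compact convex bodies it is a compact ENR, a classical fact, so this step is routine. A reader wishing to avoid Alexander duality as a black box can instead run a Mayer--Vietoris sequence on the decomposition $S^d=U\cup\overline{S^d\setminus U}$, using vanishing of the Betti numbers of $S^d$ in the intermediate degrees $1,\dots,d-1$ together with a deformation retraction of a small neighborhood of the interface onto the boundary of~$U$; the connecting homomorphism then realizes the same duality isomorphism. Either route, composed with the nerve theorem, delivers the lemma.
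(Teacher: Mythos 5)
Your proof is correct and follows essentially the same route as the paper's: pass to the one-point compactification $\s^d$, apply Alexander duality to get $\beta_0(\s^d \setminus U) - 1 = \beta_{d-1}(U)$, and then invoke the nerve theorem. You are somewhat more explicit than the paper about the supporting details (the compact-ENR hypothesis justifying duality, the fact that the unbounded component absorbs $\infty$ without changing $\beta_0$ for $d \geq 2$, and the $d=1$ bookkeeping, which the paper simply sidesteps with ``Assume $d>1$''), but the core argument is the same.
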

\begin{proof} 
  The number of holes of $U$ is, by definition, the number of
  connected components of $\R^d \setminus U$, which is $\beta_0(\R^d
  \setminus U)$. 
  Assume $d >1$.
  For any compact locally contractible subset $T \subseteq \s^d$,
  Alexander duality gives $\beta_{d-1}(T) = \beta_0(\s^d \setminus T) -1$.  
  Identifying the $d$-sphere with the one-point compactification of $d$-space, $\s^d \simeq \R^d
  \cup\{\infty\}$, we have that 
  \[ \beta_0(\R^d \setminus U) =
  \beta_0(\s^d \setminus U) =
  \beta_{d-1}(U)+1, \]
  and by the nerve theorem, $\beta_{d-1}(U) = \beta_{d-1}(\nerv(\mathcal{X}))$.
\end{proof}

As an illustration let us see how a version of the upper bound of
Kovalev~\cite{k-svm-88} for compact convex objects immediately follows
from Lemma~\ref{l:hole-betti}:

\begin{corollary}\label{c:arb}
  The number of holes in the union of $n$ compact convex
  objects in~$\R^d$ is at most~$\binom{n}d +1$.
\end{corollary}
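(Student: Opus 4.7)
The plan is to derive this bound directly from Lemma~\ref{l:hole-betti} by controlling $\beta_{d-1}(\nerv(\mathcal{X}))$ via a trivial dimension count on the chain complex of the nerve, without computing $\beta_{d-1}$ exactly.

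By Lemma~\ref{l:hole-betti}, the number of holes equals $\beta_{d-1}(\nerv(\mathcal{X}))+1$, so it suffices to show $\beta_{d-1}(\nerv(\mathcal{X})) \le \binom{n}{d}$. Using the identity~\eqref{e:betti}, I have
\[
\beta_{d-1}(\nerv(\mathcal{X})) = \dim\ker\partial_{d-1} - \rank\partial_{d} \le \dim\ker\partial_{d-1} \le \dim \chain_{d-1}(\nerv(\mathcal{X})),
\]
since $\ker\partial_{d-1}$ is a subspace of $\chain_{d-1}(\nerv(\mathcal{X}))$.

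Now $\dim \chain_{d-1}(\nerv(\mathcal{X}))$ is by definition the number of $(d-1)$-simplices of $\nerv(\mathcal{X})$, i.e. the number of $d$-element subfamilies $\mathcal{Y} \subseteq \mathcal{X}$ with $\bigcap_{X\in\mathcal{Y}} X \neq \emptyset$. Trivially this is at most the total number of $d$-element subsets of $\mathcal{X}$, which is $\binom{n}{d}$. Combining everything, the number of holes is at most $\binom{n}{d}+1$, as required.

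There is really no obstacle here; the argument is a one-line dimension count once Lemma~\ref{l:hole-betti} is in hand. The only minor point worth noting is the boundary case $d=1$: here Lemma~\ref{l:hole-betti} was proved assuming $d>1$, but the statement for $d=1$ follows trivially since $n$ compact convex (hence interval) subsets of~$\R$ partition the line into at most $n+1 = \binom{n}{1}+1$ complementary components. So the bound holds in all dimensions $d \ge 1$.
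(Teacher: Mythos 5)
Your argument is identical to the paper's: both apply Lemma~\ref{l:hole-betti} and then bound $\beta_{d-1}(\nerv(\mathcal{X}))$ by a trivial dimension count, observing that $\ker\partial_{d-1}$ sits inside $\chain_{d-1}(\nerv(\mathcal{X}))$, whose dimension is at most $\binom{n}{d}$, and that passing to the quotient by $\im\partial_d$ can only decrease the dimension. The one thing you add is the explicit handling of $d=1$ (where Lemma~\ref{l:hole-betti} was stated with the assumption $d>1$); that is a reasonable and correct caveat, though the paper leaves it implicit.
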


\begin{proof}
  Let $\mathcal{X}$ be a family of $n$ compact convex objects in $\R^d$. By
  Lemma~\ref{l:hole-betti}, the number of holes of the union of the
  members of $\mathcal{X}$ is $\beta_{d-1}(\nerv(\mathcal{X})) +1$. Let
  $\partial_i$ denote the $i$th boundary operator of $\nerv(\mathcal{X})$. By
  definition, $\beta_{d-1}$ is the dimension of the quotient of the
  vector space $\ker
  \partial_{d-1}$  by the vector space $\im \partial_{d}$. Now, $\ker
  \partial_{d-1}$ is contained in the space
  $\chain_{d-1}(\nerv(\mathcal{X}))$ spanned by the $(d-1)$-simplices of
  $\nerv(\mathcal{X})$; since $\nerv(\mathcal{X})$ has $n$ vertices it has at most
  $\binom{n}{d}$ simplices of dimension $d-1$ and $\ker
  \partial_{d-1}$ therefore has dimension at most $\binom{n}{d}$.
  This dimension can only go down by taking the quotient by $\im
  \partial_{d}$, so $\beta_{d-1}(\nerv(\mathcal{X})) \le
  \binom{n}{d}$.
\end{proof}

\paragraph{The number of holes in the union of $\F$.}

We now prove Theorem~\ref{thrm:cubic voids} by using
Lemma~\ref{l:hole-betti}.
\begin{proof}[Proof of Theorem~\ref{thrm:cubic voids}]
  Kovalev~\cite{k-svm-88} already established that any union of $n$
  convex objects in $\R^3$ has $O(n^3)$ holes.  Hence this bound applies to families of translates. 
  It remains to prove that this bound is tight by constructing a family
  whose union has $\Omega(n^3)$ holes. Let $K$ denote the convex body,
  and let $\F$ denote the family of $n=3m$ translates of $K$
  constructed above. Let $U = \bigcup_{X \in \F} X$.

  Recall that the maximal simplices of $\nerv(\F)$ are identified by
  Lemma~\ref{l:nerve}. By Lemma~\ref{l:hole-betti}, the number of
  holes of $U$ is $\beta_2(\nerv(\F)) +1$ which equals, by
  Equation~\eqref{e:betti}, $ \dim \ker \partial_2 - \rank
  \partial_3 +1$, where $\partial_i$ denotes the $i$th boundary map of
  $\nerv(\F)$. We compute $\beta_2(\nerv(\F))$ by computing explicitly
  a basis for $\ker \partial_2$ and a basis for $\im \partial_3$.

  To compute a basis of $\im \partial_3$, let $\mc{S}$ denote
  the set of $3$-simplices of $\nerv(\F)$ containing $B_1$ and let
  $\mc{T}$ stand for the set of images of the simplices of $\mc{S}$
  under $\partial_3$:
  \[ 
  \mc{S} = \{ \sigma : |\sigma| = 4 \hbox{ and } B_1 \in \sigma\}
  \quad \hbox{and} \quad \mc{T} = \{\partial_3 \sigma : \sigma \in
  \mc{S}\}.
  \]
  Observe that $\mc{T}$ is a linearly independent family. Indeed, for
  any $\sigma \in \mc{S}$, the $2$-simplex $\sigma \setminus \{B_1\}$
  has non-zero coefficient in $\partial_3(\sigma)$ but has zero
  coefficient in $\partial_3(\tau)$ for every $\tau \in \mc{S}
  \setminus \{\sigma\}$. To see that $\mc{T}$ spans $\im
  \partial_3$, let $\sigma$ be a $3$-simplex of $\nerv(\F)$.  If $B_1
  \in \sigma$ then $\sigma \in \mc{A}$ and so $\partial_3(\sigma) \in
  \mc{T}$. If $B_1 \notin \sigma$, since $\partial_3 \circ \partial_4
  =0$ we have
  \[ 
  \partial_3 \circ \partial_4 (\sigma \cup \{B_1\}) = \lambda
  \partial_3(\sigma) + \sum_{X \in \sigma} \lambda_v\partial_3(\sigma
  \cup \{B_1\} \setminus \{X\}) =0
  \]
  where $\lambda$ and the $\lambda_v$ are in $\{\pm1\}$. This implies
  that $\partial_3(\sigma)$ is a sum of $\pm\partial_3(\tau)$ with
  $\tau \in \mc{S}$, and thus lies in the span of $\mc{T}$.
  Therefore, as claimed, $\mc{T}$ is a basis of $\im \partial_3$.

  Now, let $\mc{S}'$ denote the set of 2-simplices in $\F$ that
  contain $B_1$ and let $\mc{T}' = \{\partial_2 \sigma : \sigma \in
  \mc{S}'\}$. The same arguments yield that $\mc{T}'$ is a basis of
  $\im \partial_2$.  Also let $\mc{S}''$ denote the set of all
  2-simplices contained in $\F$.

  We can finally compute $\beta_2(\nerv(\F))$ using the rank-nullity theorem,
  \begin{align*}
    \beta_2(\nerv(\F)) & = \nullity \partial_2 - \rank \partial_3 \\
    & = \dim \chain_2(\nerv(\F)) - \rank \partial_2 - \rank \partial_3 \\
    & = |\mc{S}''| -|\mc{T}'| -|\mc{T}|.
  \end{align*}
  Counting all quadruples in $\Delta_1$ or $\Delta_2$ that
  contain $B_1$ (taking care that some quadruples appear both in
  $\Delta_1$ and $\Delta_2$) we have
  \[
  | \mc{T} | = |\mc{S}| = 2\tbinom{2m-1}{3} - \tbinom{m-1}{3}.
  \]
  Next, counting all triples in $\F$ that contain $B_1$ we have
  \[
  | \mc{T}' | = |\mc{S}'| = \tbinom{3m-1}{2}.
  \]
  Then, counting all triples contained in $\nerv(\F)$, which are the triples in $\Delta_{i,j,k}$ plus $\Delta_{j,k}$ plus the triples in $\Delta_1$ or $\Delta_2$ (accounting for triples belonging to
  both $\Delta_1$ and $\Delta_2$), we have
  \[
  |\mc{S}''| = m^3 +m(m-1) +2\tbinom{2m}{3} - \tbinom{m}{3}.
  \]
  Finally, using $\binom{a}{b} - \binom{a-1}{b} = \binom{a-1}{b-1}$, we have
  \begin{align*}
    \beta_2(\nerv(\F)) & = |\mc{S}''| -|\mc{T}| -|\mc{T}'|
    \\
    & = m +m(m-1) +2\tbinom{2m}{3} - \tbinom{m}{3} 
    - 2\tbinom{2m-1}{3} + \tbinom{m-1}{3}
    - \tbinom{3m-1}{2}
    \\
    & = m^3 + m(m-1) + 2\tbinom{2m-1}{2} - \tbinom{m-1}{2} -  \tbinom{3m-1}{2}
    \\
    & = m^3 + m(m-1) +(2m-1)(2m-2) -\half(m-1)(m-2) -\half(3m-1)(3m-2)
    \\
    & = m^3 -m .
  \end{align*}
  So $\F$ is a family of $3m$ translates of a convex body in $\R^3$, and
  the union has $m^3 -m +1$ holes.
\end{proof}

\bibliographystyle{plain}
\bibliography{voids}

\begin{thebibliography}{10}

\bibitem{AHKS2014Union}
Pankaj~K. Agarwal, Sariel Har-Peled, Haim Kaplan, and Micha Sharir.
\newblock Union of random {M}inkowski sums and network vulnerability analysis.
\newblock {\em Discrete \& Computational Geometry}, 52(3):551--582, 2014.

\bibitem{StateOfUnion}
Pankaj~K. Agarwal, J\'anos Pach, and Micha Sharir.
\newblock State of the union (of geometric objects).
\newblock In {\em Proc. Joint Summer Research Conf. on Discrete and
  Computational Geometry: 20 Years Later}, volume 452 of {\em Contemp. Math.},
  pages 9--48. AMS, 2008.

\bibitem{OnTranslationalMotionPlanning}
Boris Aronov and Micha Sharir.
\newblock On translational motion planning of a convex polyhedron in 3-space.
\newblock {\em SIAM J. Comput.}, 26(6):1785--1803, 1997.

\bibitem{a-vdsfg-91}
Franz Aurenhammer.
\newblock Voronoi diagrams: A survey of a fundamental geometric data structure.
\newblock {\em ACM Comput. Surv.}, 23:345--405, 1991.

\bibitem{b-iscsc-48}
Karol Borsuk.
\newblock On the imbedding of systems of compacta in simplicial complexes.
\newblock {\em Fundamenta Mathematicae}, 35:217--234, 1948.

\bibitem{eps-union}
Kenneth~L. Clarkson and Kasturi Varadarajan.
\newblock Improved approximation algorithms for geometric set cover.
\newblock {\em Discrete Comput. Geom.}, 37(1):43--58, January 2007.

\bibitem{fat-convex}
Alon Efrat and Micha Sharir.
\newblock On the complexity of the union of fat convex objects in the plane.
\newblock {\em Discrete \& Computational Geometry}, 23(2):171--189, 2000.

\bibitem{f-vddt-97}
Steven Fortune.
\newblock Voronoi diagrams and {Delaunay} triangulations.
\newblock In J.~E. Goodman and J.~O'Rourke, editors, {\em Handbook of Discrete
  and Computational Geometry}, chapter~20. CRC Press LLC, Boca Raton, FL, 1997.

\bibitem{Hatcher:AlgebraicTopology-2002}
Allen Hatcher.
\newblock {\em Algebraic topology}.
\newblock Cambridge University Press, Cambridge, UK, 2002.

\bibitem{iklm-95}
Christian Icking, Rolf Klein, Ngoc-Minh L{\'e}, and Lihong Ma.
\newblock Convex distance functions in 3-space are different.
\newblock {\em Fundamenta Informaticae}, 22:331--352, 1995.

\bibitem{UnionOfJordanRegions}
Klara Kedem, Ron Livne, J\'anos Pach, and Micha Sharir.
\newblock On the union of {Jordan} regions and collision-free translational
  motion amidst polygonal ostacles.
\newblock {\em Discrete Comput. Geom.}, 1(1):59--71, 1986.

\bibitem{k-svm-88}
M.~D. Kovalev.
\newblock Svoistvo vypuklykh mnozhestv i ego prilozhenie ({A} property of
  convex sets and its application).
\newblock {\em Mat. Zametki}, 44:89--99, 1988.
\newblock In Russian.

\bibitem{CGC42}
Joseph S.~B. Mitchell and Joseph O'Rourke.
\newblock Computational geometry column 42.
\newblock {\em International Journal of Computational Geometry \&
  Applications}, 11(05):573--582, 2001.

\bibitem{Munkres:AlgebraicTopology-1984}
James~R. Munkres.
\newblock {\em Elements of {A}lgebraic {T}opology}.
\newblock Addison-Wesley, Menlo Park, CA, 1984.

\bibitem{fat-triangles}
J\'anos Pach and G\'abor Tardos.
\newblock On the boundary complexity of the union of fat triangles.
\newblock {\em SIAM Journal on Computing}, 31(6):1745--1760, 2002.

\bibitem{mp}
Micha Sharir.
\newblock Algorithmic motion planning.
\newblock In J.~E. Goodman and J.~O'Rourke, editors, {\em Handbook of Discrete
  and Computational Geometry}, pages 1037--1064. CRC Press LLC, Boca Raton, FL,
  2004.

\bibitem{DS}
Micha Sharir and Pankaj~K. Agarwal.
\newblock {\em Davenport-Schinzel Sequences and Their Geometric Applications}.
\newblock Cambridge University Press, New York, NY, USA, 2010.

\bibitem{ComplexityOfSubstructures}
Boaz Tagansky.
\newblock {\em The Complexity of Substructures in Arrangments of Surfaces}.
\newblock PhD thesis, Tel Aviv University, 1996.

\end{thebibliography}

\clearpage

\appendix

\section{Witness points for Lemma~\ref{l:nerve}}
\label{apx:universal}

In this appendix we argue that the four sets of translates of~$K$
considered in Lemma~\ref{l:nerve} have non-empty intersection.  We
show this by exhibiting an explicit point in the intersection of each
set, in the following four claims.  We will need the bounds
\[ 
1 < \zeta_3 < \nicefrac{5}{4}, \quad \nicefrac{3}{2} < \zeta_2 <
\nicefrac{7}{4}. 
\]
\begin{claim}
  For $i \in [m]$ and $k \in [m-1]$ we have $\vec{w}_{0,k+1}
  +\vec{a}_{i} \in A_{i} \cap C_{k} \cap C_{k+1}$.
\end{claim}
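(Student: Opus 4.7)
The plan is to verify the three memberships $\vec{w}_{0,k+1} + \vec{a}_i \in A_i$, $\vec{w}_{0,k+1} + \vec{a}_i \in C_k$, and $\vec{w}_{0,k+1} + \vec{a}_i \in C_{k+1}$ separately, in each case by subtracting off the relevant translation vector and showing that the resulting point lies in $K$.

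The containment in $A_i$ is immediate: subtracting $\vec{a}_i$ leaves the vertex $\vec{w}_{0,k+1}$ of the path $\gamma = \hpaths_0$, which by the lemma just proven lies on the front boundary of $K$. For $C_k = K + \vec{c}_k$, I would first rewrite $\vec{c}_k = \vec{w}_{0,k} + \vec{w}_{0,k+1} - \ut$, so that the algebra collapses nicely to
\[
\vec{w}_{0,k+1} + \vec{a}_i - \vec{c}_k \;=\; (\ut - \vec{w}_{0,k}) + \vec{a}_i.
\]
The point $\ut - \vec{w}_{0,k}$ is an endpoint of the top edge $\ut - E_k$ of the back rectangle $\ut\uz - E_k \subseteq \Back \subseteq K$. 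The perturbation $\vec{a}_i = (0,0,-\tfrac{i}{m}\eps)$ is purely vertical, and the rectangle is axis-aligned with its vertical extent running from $\uz - E_k$ (at $z=0$) to $\ut - E_k$ (at $z=\zeta_3$), because $\uz\ut$ is a vertical segment of height $\zeta_3$ and $E_k$ lies in the plane $z=0$. Hence adding $\vec{a}_i$ slides our point straight down by at most $\eps$, which keeps it inside the same rectangle provided $\eps \le \zeta_3$ — a condition we may freely impose, tightening the choice of $\eps$ in the construction if necessary. The $C_{k+1}$ case is handled identically: one computes $\vec{w}_{0,k+1} + \vec{a}_i - \vec{c}_{k+1} = (\ut - \vec{w}_{0,k+2}) + \vec{a}_i$, which is the opposite endpoint of the top edge of the neighbouring back rectangle $\ut\uz - E_{k+1}$ shifted downward by $\vec{a}_i$.

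There is no serious obstacle here; the argument is essentially bookkeeping in the definitions of $\vec{c}_k$, $\vec{a}_i$, and $\Back$. The only step that requires a moment's thought is identifying the axis-aligned orientation of the back rectangles — namely, that each rectangle $\ut\uz - E_k$ lies in a plane parallel to the $yz$-plane with vertical extent exactly $[0,\zeta_3]$ — so that a small purely-vertical perturbation of a top-edge vertex cannot leave the rectangle.
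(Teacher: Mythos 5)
Your proof is correct and takes essentially the same route as the paper: both identify $w_{0,k+1}$ as a point of $K$ (giving the $A_i$ containment) and as a point on the top edge of the back rectangle $R_k$ of $C_k$ and of $R_{k+1}$ of $C_{k+1}$, and both then absorb the downward perturbation $\vec{a}_i$ using the vertical extent $\zeta_3 > \eps$ of those rectangles. Your explicit translation algebra $w_{0,k+1} - c_k = \ut - w_{0,k}$ and $w_{0,k+1} - c_{k+1} = \ut - w_{0,k+2}$ is correct, and in fact fixes a small typo in the paper's own proof, which writes ``the point $\ut - w_{0,k+1}$ of $C_{k+1}$'' where it should say $\ut - w_{0,k+2}$.

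One minor imprecision in your wrap-up: the rectangles $\ut\uz - E_k$ do \emph{not} lie in planes parallel to the $yz$-plane, since the edges $E_k$ have nonzero $x$-components. This does not damage your argument, though, because the only fact you actually use — that each rectangle has vertical extent exactly $[0,\zeta_3]$ along the $z$-axis, so a point on the top edge can slide straight down by less than $\zeta_3$ and remain inside — is correctly derived from $\uz\ut$ being a vertical segment of length $\zeta_3$ and $E_k$ lying in the plane $z=0$. Also note that the condition $\eps < \zeta_3$ that you invoke is used identically in the paper's proof; it does indeed follow from $\eps < \eps_1$ (the distance from the $w_{j,k}$, $j,k \in [m]$, to $K$ is well below $1 < \zeta_3$), so no actual ``tightening'' of the construction is needed.
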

\begin{proof}
  Since $w_{0,k+1} \in K$, $w_{0,k+1} + a_{i} \in A_{i}$.
  Furthermore, $w_{0,k+1}$ coincides with the point $\ut - w_{0,k}$
  of~$C_{k}$ and the point $\ut-w_{0,k+1}$ of $C_{k+1}$.  Since the
  vertical segment below $w_{0,k+1}$ lies in $C_{k} \cap C_{k+1}$ and
  $\eps < \zeta_3$, the claim holds.
\end{proof}

\begin{claim}
  For $(i,j,k) \in [m]^{3}$ we have
  $\vec{v}_{j,k} +\vec{a}_i +\vec{b}_j \in 
  A_{i} \cap B_{j} \cap C_{k}$.
\end{claim}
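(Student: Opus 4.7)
The plan is to verify that $p := \vec{v}_{j,k} + \vec{a}_i + \vec{b}_j$ lies in $A_i \cap B_j \cap C_k$ by showing $p - \vec{a}_i$, $p - \vec{b}_j$, and $p - \vec{c}_k$ all belong to $K$. The key algebraic step—which I would establish first by unrolling $\vec{v}_{j,k} = \tfrac{1}{(j+1)^3}\vec{w}_{j,k} + \tfrac{(j+1)^3-1}{(j+1)^3}\vec{w}_{j,k+1}$ together with $\vec{w}_{j,k} = \vec{w}_{j,0} + \vec{w}_{0,k}$—is the identity
\[
p - \vec{a}_i \;=\; \vec{v}_{j,k} - \vec{w}_{j,0} \;=\; \tfrac{1}{(j+1)^3}\vec{w}_{0,k} + \tfrac{(j+1)^3-1}{(j+1)^3}\vec{w}_{0,k+1}.
\]
This exhibits $p - \vec{a}_i$ as a convex combination of the adjacent vertices $\vec{w}_{0,k}$ and $\vec{w}_{0,k+1}$ of $\gamma$, hence a point on the edge $E_k \subset \gamma \subset K$, immediately yielding $p \in A_i$.

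For $p \in C_k$, I would use that the facet $R_k$ of $C_k$ is a vertical rectangle whose top edge coincides with $E_k$ (by the defining choice of $\vec{c}_k$) and which extends straight downward a distance $\zeta_3$, since $\ut - \uz = (0,0,\zeta_3)$. The point $p = (p - \vec{a}_i) + \vec{a}_i$ sits directly below the point $p - \vec{a}_i \in E_k$ at depth $\tfrac{i}{m}\eps \leq \eps$; since $\eps$ is chosen much smaller than $\zeta_3$, we have $p \in R_k \subset C_k$.

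Finally, $p - \vec{b}_j = \vec{v}_{j,k} + \vec{a}_i$ is $\vec{v}_{j,k}$ shifted straight down in $z$ by $\tfrac{i}{m}\eps$. By the preceding lemma, $\vec{v}_{j,k}$ lies on both the front and upper boundary of $K$ along the path $\hpaths_j$, so the downward ray starting at $\vec{v}_{j,k}$ enters the interior of $K$ and $\vec{v}_{j,k} + \vec{a}_i \in K$ for all sufficiently small $\eps$. The main subtlety of the proof lies precisely here: one must check that the direction $(0,0,-1)$ sits strictly inside the tangent cone of $K$ at the vertex $\vec{v}_{j,k}$, so that some quantitative $\eps > 0$ witnesses this as a legitimate interior perturbation. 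This can be folded into the already existing choice of $\eps$ in the construction by imposing one further uniform upper bound over the finitely many indices $(i,j,k)\in[m]^3$, yielding $p \in B_j$ and completing the claim.
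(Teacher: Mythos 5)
Your proof follows the same route as the paper's: $p - \vec{a}_i = \vec{v}_{j,k} + \vec{b}_j$ lies on the edge $E_k$, which is simultaneously a chord of $K$ (giving $p \in A_i$) and the top edge of the facet $R_k$ of $C_k$ (giving $p \in C_k$ after the small downward shift by $\vec{a}_i$), and the downward shift of $\vec{v}_{j,k}$ stays inside $K$ (giving $p \in B_j$). One caution on the $B_j$ step: the preceding lemma establishes only that $\hpaths_j$ lies on the \emph{front} boundary of $K$, not on the upper boundary, so citing it for the downward-translation claim is a slight overreach; the paper's own proof likewise merely asserts that a small vertical shift remains in $B_j$, and the reason this holds (every vertex of $K$ has $z$-coordinate in $[0,\zeta_3]$, $\vec{v}_{j,k}$ sits at height $w_{j,0}^z \geq 1$, and the vertical segment from $\vec{v}_{j,k}$ down to the $z=0$ slice of $K$ lies in $K$, so any $\eps < 1$ suffices with no new constraint needed) is left implicit both in your account and in the paper's.
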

\begin{proof}
  The point $v_{j,k}+b_{j}$ lies in~$B_{j}$ and on the edge~$E_{k}$,
  the top edge of the facet~$R_{k}$ of~$C_{k}$. The translation $a_{i}$ is a
  small vertical translation, so $v_{j,k} + b_{j} + a_{i}$ lies
  inside~$R_{k}$ and therefore in~$C_{k}$, and lies in the interior of~$B_{j}$.
  Finally, since $v_{j,k} + b_{j} \in E_k \subset K$, the point
  $v_{j,k} + b_{j} + a_{i} \in A_{i}$.
\end{proof}

\begin{claim}
  We have $(0,-1,0) \in \bigcap \{A_1,\dots,A_m,B_1,\dots,B_m\}$.
\end{claim}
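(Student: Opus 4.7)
The plan is to exhibit an explicit two-dimensional convex region $Q \subset K$ lying in the plane $x = 0$ that contains all the required preimages of $(0,-1,0)$. Since $a_i = (0, 0, -\tfrac{i}{m}\eps)$ and $b_j = -w_{j, 0}$, the claim reduces to showing that both $p_i := (0, -1, \tfrac{i}{m}\eps)$ and $q_j := (0, -1 - \alpha_j, \beta_j)$ lie in $K$ for every $i, j \in [m]$, where I write $\alpha_j := \sum_{t=1}^j t^{-2}$ and $\beta_j := \sum_{t=1}^j t^{-3}$.

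I would take $Q$ to be the convex hull of four explicit points of $K$ in the plane $x = 0$: the endpoints $w_{0,0} = (0,0,0)$ and $w_{\infty, 0} = (0, -\zeta_2, \zeta_3)$ of the path $\eta \subset \Front$, and the two points $\uz - w_{0, \infty} = (0, -2 - \zeta_3, 0)$ and $\ut - w_{0, \infty} = (0, -2 - \zeta_3, \zeta_3)$ that arise as the $k \to \infty$ limit of the rectangles $\ut\uz - E_k$ of $\Back$. Projected to $(y, z)$-coordinates, $Q$ is a quadrilateral bounded by $z = 0$ below, $y = -2 - \zeta_3$ on the back, $z = \zeta_3$ on top (for $y \in [-2-\zeta_3, -\zeta_2]$), and the slanted segment $z = -y \zeta_3/\zeta_2$ on the upper-right (for $y \in [-\zeta_2, 0]$). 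The verification that each $p_i$ and $q_j$ lies in $Q$ is then elementary planar geometry: for $p_i$, at $y = -1$ the ceiling is $z = \zeta_3/\zeta_2$, well above $\tfrac{i}{m}\eps$ for $\eps$ small; for $q_j$, the estimates $\alpha_j \in [1, \zeta_2)$ and $\beta_j \in [1, \zeta_3)$ place $(y, z)$ inside the rectangular back portion of $Q$.

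The main---and essentially only---obstacle is to ensure that this back portion is wide enough in the $y$-direction to accommodate all $q_j$, which reduces to the single inequality $\zeta_2 < 1 + \zeta_3$; this is immediate from the numeric bounds $\zeta_2 < \nicefrac{7}{4}$ and $\zeta_3 > 1$ given at the start of the appendix.
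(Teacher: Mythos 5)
Your proof is correct and takes essentially the same route as the paper: both work in the cross-section of $K$ by the plane $x=0$, place $(0,-1,0)-a_i$ and $(0,-1,0)-b_j$ inside the convex region spanned by $w_{0,0}$, $w_{\infty,0}$, $\uz-w_{0,\infty}$, $\ut-w_{0,\infty}$, and reduce to the inequality $\zeta_2 < 1+\zeta_3$ (which the paper writes as $2+\zeta_3-\zeta_2>1$). Your explicit quadrilateral $Q$ is simply a slightly more pedestrian packaging of the paper's observations that $p$ lies on the bottom edge of that cross-section and that the translated path $p+\eta$ fits between the bottom and top edges.
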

\begin{proof}
  The point $p = (0, -1, 0)$ lies on the bottom edge of~$K$
  connecting~$w_{0,0} = (0,0,0)$ and $u_{0} - w_{0,\infty} =
  (0,-2-\zeta_3,0)$. Since the vertical segment of length one with
  bottom end~$p$ lies in~$K$ and $\eps<1$, we have $p - a_{i} \in K$
  and therefore~$p \in A_{i}$, for $i \in [m]$.

  The length of the top edge of~$K$ connecting $w_{\infty,0} =
  (0,-\zeta_2,\zeta_3)$ and $u_{1} - w_{0,\infty} =
  (0,-2-\zeta_3,\zeta_3)$ is $2 + \zeta_3 - \zeta_2 > 1$, and so the
  path $p + \eta$ lies in~$K$.  This implies that $p - b_{j} = p +
  w_{j,0} \in K$, and therefore $p \in B_{j}$, for $j \in [m]$.
\end{proof}

\begin{claim}
  We have $p = (\zeta_2 , 2, -1) \in \bigcap
  \{B_1,\dots,B_m,C_1,\dots,C_m\}$.
\end{claim}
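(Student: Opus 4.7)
The plan is to establish two families of containments: $p + \vec{w}_{j, 0} \in K$ for each $j \in [m]$ (equivalent to $p \in B_j$) and $p - \vec{c}_k \in K$ for each $k \in [m]$ (equivalent to $p \in C_k$). I abbreviate $S_j = \sum_{t=1}^j t^{-2}$ and $T_j = \sum_{t=1}^j t^{-3}$, so $\zeta_2 = S_\infty$, $\zeta_3 = T_\infty$, $\vec{w}_{j,0} = (0, -S_j, T_j)$, and $\vec{w}_{0,k} = (S_k, T_k, 0)$.

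For the $B_j$ containments, a direct computation gives $p + \vec{w}_{j, 0} = (\zeta_2, 2 - S_j, T_j - 1)$, which attains the maximal $x$-coordinate $\zeta_2$ of $K$. It therefore suffices to show that its $(y, z)$-projection lies in the two-dimensional face $K \cap \{x = \zeta_2\}$, whose vertices project to $(-2, 0)$, $(-2, \zeta_3)$ and the points $(\zeta_3 - S_{j'}, T_{j'})$ for $j' \geq 0$ (respectively $\uz$, $\ut$, and $\vec{w}_{j', \infty} = \vec{w}_{j',0} + \vec{w}_{0, \infty}$). Using $1 < \zeta_3 < 5/4$ and $3/2 < \zeta_2 < 7/4$ one checks that $2 - S_j \in [\zeta_3 - 1, \zeta_3]$ for every $j \in [m]$, so the relevant upper boundary of the projected face at abscissa $y = 2 - S_j$ is the segment from $\vec{w}_{0, \infty} = (\zeta_3, 0)$ to $\vec{w}_{1, \infty} = (\zeta_3 - 1, 1)$, along which $z = \zeta_3 - y = \zeta_3 + S_j - 2$. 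The height $T_j - 1$ lies below this bound iff $T_j - S_j \leq \zeta_3 - 1$, which holds because $T_j - S_j = \sum_{t=1}^j (t^{-3} - t^{-2}) \leq 0 < \zeta_3 - 1$. Together with $T_j - 1 \geq 0$ for $j \geq 1$ this places the point inside the face.

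For the $C_k$ containments, $p - \vec{c}_k = (2\zeta_2 - S_k - S_{k+1}, -T_k - T_{k+1}, \zeta_3 - 1)$. I express this as the convex combination
\[
\rho\, \vec{w}_{0, 0} + \mu\, (\uz - \vec{w}_{0, t_k}) + \nu\, (\ut - \vec{w}_{0, t_k})
\]
of the front vertex $\vec{w}_{0, 0} = \vec{0}$ and the two endpoints of a vertical fiber of $\Back$ at a path parameter $t_k$ along $\gamma$ to be determined. The $z$-equation forces $\nu = 1 - 1/\zeta_3$, and equating the two expressions for $\mu + \nu$ coming from the $x$- and $y$-equations yields the single determining relation
\[
(\zeta_2 - S_{t_k})(T_k + T_{k+1}) = (2 + T_{t_k})(2\zeta_2 - S_k - S_{k+1}).
\]
Since the left-hand side strictly decreases and the right-hand side strictly increases along $\gamma$, and since $\zeta_2(T_k + T_{k+1}) + 2(S_k + S_{k+1}) > 4\zeta_2$ already holds at $k = 1$ and is preserved as $k$ grows, the left side strictly exceeds the right side at $t_k = 0$, so a unique $t_k \in (0, \infty)$ solves the relation. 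The non-negativity of $\mu$ and $\rho$ then reduces to the two-sided bound
\[
T_k + T_{k+1} - 2 \;\leq\; T_{t_k} \;\leq\; \frac{\zeta_3(T_k + T_{k+1})}{\zeta_3 - 1} - 2;
\]
the upper bound exceeds $\zeta_3$ (so $\mu \geq 0$ is automatic because $T_{t_k} \leq \zeta_3$), while the lower bound, after substituting the determining relation, reduces to $S_{t_k} \leq S_k + S_{k+1} - \zeta_2$, which in turn reduces to $\zeta_2 - 2 \leq (S_k - T_k) + (S_{k+1} - T_{k+1})$ and holds because the left side is negative and the right side non-negative.

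The hardest step will be organizing the $C_k$ part so that the non-negativity of $\rho$ follows cleanly from the implicit formula for $t_k$; the key observation making everything work is that $S_j \geq T_j$ along $\gamma$ (equivalently $\sum 1/t^2 \geq \sum 1/t^3$) together with $\zeta_2 < 2$, which together close both sides of the non-negativity bound without any delicate case analysis.
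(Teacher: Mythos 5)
Your treatment of the $B_j$ containments is correct, and close in spirit to the paper's: both arguments confine $p-\vec b_j$ to a set determined by a handful of extreme points of $K$, though you do it by passing to the 2-dimensional face $K\cap\{x=\zeta_2\}$ and checking the point lies under the top edge of its $(y,z)$-projection, whereas the paper confines $p-\vec b_j$ to an explicit axis-parallel rectangle and checks the four corners individually.

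The $C_k$ part has a genuine gap. Write $S^*=S_k+S_{k+1}-\zeta_2$ and $T^*=T_k+T_{k+1}-2$. You correctly observe that $\rho\ge 0$ is equivalent to $S_{t_k}\le S^*$, and (via the determining relation) also to $T_{t_k}\ge T^*$. But the next step --- that this ``reduces to'' $\zeta_2-2\le(S_k-T_k)+(S_{k+1}-T_{k+1})$, i.e.\ to $T^*\le S^*$ --- is not justified. The point $(S^*,T^*)$ lies on your determining line, and the curve $(S_t,T_t)_{t\ge 0}$ crosses that line once, from below to above; so $\rho\ge 0$ is equivalent to the assertion that $(S^*,T^*)$ lies on or \emph{below the curve}. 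The inequality $T^*\le S^*$ only says $(S^*,T^*)$ lies below the diagonal $T=S$, and the curve \emph{also} lies below that diagonal (your observation $S_t\ge T_t$ cuts the wrong way: used together with the determining relation it yields an \emph{upper} bound on $T_{t_k}$, whereas $\rho\ge 0$ needs a lower bound). So $T^*\le S^*$ is necessary but not, by itself, sufficient, and the chain of ``reduces to'' does not close. The conclusion is in fact true, but establishing it requires using the actual shape of the concave curve, not just that it sits under the diagonal --- for instance one can note that $T^*\le 2\zeta_3-2<\tfrac12$ while $S^*\ge\tfrac94-\zeta_2>\tfrac12$, and on the first edge of $\gamma$ one has $T_{\text{curve}}(S)=S$, so $T_{\text{curve}}(S^*)\ge T_{\text{curve}}(\tfrac12)=\tfrac12>T^*$. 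This is a different kind of estimate from the one you invoke, and it is the missing ingredient. (The paper sidesteps the issue entirely by confining $p-\vec c_k$ in an explicit rectangle and expressing each corner as a convex combination of named vertices of $K$.)
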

\begin{proof}
  For $j \in [m]$, we have $-b_{j} = w_{j,0} \in \{0\} \times
  [-\zeta_2,-1] \times [1,\zeta_3]$.  
  This gives 
  \[
  p - b_{j} \in \{\zeta_2\} \times [2-\zeta_2,1] \times
  [0, \zeta_3 -1] \subset \{\zeta_2\} \times [-2,1] \times [0, \zeta_3
    - 1].
  \]
  The four corners of this rectangle are 
  \begin{align*}
    \uz & = (\zeta_2, -2, 0) \in K, & q_{1} & = (\zeta_2, -2, \zeta_3 - 1)
    \in [\uz,\ut], \\
    q_{2} & = (\zeta_2, 1, 0) \in [\uz,w_{0,\infty}], & 
    q_{3} & = (\zeta_2,1,\zeta_3-1).
  \end{align*}
  Since $\vec{w}_{0,\infty} = (\zeta_2,\zeta_3,0)$,
  $\vec{w}_{1,\infty} = (\zeta_2,\zeta_3-1,1)$, and ${0 < \zeta_3-1 <
    1}$, we have
  \[
  q_{3} = (2-\zeta_3) \vec{w}_{0,\infty} + (\zeta_3-1)
  \vec{w}_{1,\infty} \in K,
  \]
  and so the entire rectangle lies in~$K$.  This implies $p - b_{j}
  \in K$, and so $p \in K + b_{j} = B_{j}$.

  For $k \in [m]$, we have
  \begin{align*}
    p-c_{k} & = p + \ut - w_{0,k} - w_{0,k+1} = 
    (2\zeta_2, 0, \zeta_3-1) - w_{0,k} - w_{0,k+1} \\
    & = \textstyle
    \Big(2\zeta_2 - 2\sum_{t = 1}^{k} t^{-2} - (k+1)^{-2}, 
    -2\sum_{t=1}^{k}t^{-3} - (k+1)^{-3}, \zeta_3-1\Big) \\
    & \in [0, 2\zeta_2 -2 - \nicefrac{1}{4}]
    \times [-2\zeta_3, -2 - \nicefrac{1}{8}]
    \times \{\zeta_3-1\} \\
    & \subset [0, \zeta_2 - \nicefrac{1}{2}]
    \times [-\nicefrac{5}{2}, -1]
    \times \{\zeta_3-1\}.
  \end{align*}
  Here we used $\zeta_2 < 7/4$ and $\zeta_3 < 5/4$.  The four corners
  of this rectangle are
  \begin{align*}
    q_{1} & = (0, -\nicefrac{5}{2}, \zeta_3-1), &
    q_{2} & = (0, -1, \zeta_3-1), \\
    q_{3} & = (\zeta_2 - \nicefrac{1}{2}, -\nicefrac{5}{2}, \zeta_3-1), &
    q_{4} & = (\zeta_2 - \nicefrac{1}{2}, -1, \zeta_3-1). 
  \end{align*}
  We will argue that all four corners lie in~$K$, implying that $p -
  c_{k} \in K$, and so $p \in K + c_{k} = C_{k}$.

  For $q_{1}$ and $q_{2}$ this is obvious, as they lie inside the
  convex hull of the points
  \begin{align*}
    w_{0,0} & = (0,0,0), & w_{1,0} & = (0, -1, 1), \\
    \uz + w_{0,\infty} & = (0, -2-\zeta_3,0), &
    \ut + w_{0, \infty} & = (0, -2-\zeta_3, \zeta_3).
  \end{align*}
  The point $q_{4}$ lies on the segment $q_{2}q_{5}$, where $q_{5} =
  (\zeta_2, -1, \zeta_3 - 1)$, and $q_{5}$ lies in the convex hull of
  $w_{0,\infty}$, $w_{1,\infty}$, $\uz$, and~$\ut$.

  Finally, the point $q_{3}$ lies on the facet $\uz\ut - E_0$ of~$K$.
  Indeed, the corners of this facet are
  \begin{align*}
    \uz & = (\zeta_2, -2, 0), & \uz - w_{0,1} & = (\zeta_2 -1, -3, 0), \\
    \ut & = (\zeta_2, -2, \zeta_3), & \ut - w_{0,1} & = (\zeta_2 -1,
    -3, \zeta_3). \qedhere
  \end{align*}
\end{proof}
\end{document}